\newtheorem{theorem}{Theorem}
\newtheorem{proposition}{Proposition}
\newtheorem{lemma}{Lemma}
\newtheorem{observation}{Observation}
\newtheorem{protocol}{Protocol}
\begin{document}

\title{
Classical capacities under physical constraints: More capacity with less entanglement
}

\author{Sudipta Mondal$^1$, Pritam Halder$^{1}$, Saptarshi Roy$^2$, Aditi Sen(De)$^1$}

\affiliation{$^1$Harish-Chandra Research Institute, A CI of Homi Bhabha National Institute, Chhatnag Road, Jhunsi, Prayagraj 211019, India\\
\(^2\)  QICI Quantum Information and Computation Initiative, Department of Computer Science,\\
The University of Hong Kong, Pokfulam Road, Hong Kong }

\begin{abstract}

Current advancements in communication equipment demand the investigation of classical information transfer over quantum channels, by encompassing realistic scenarios in finite dimensions.  To address this issue,  we develop a framework for analyzing classical capacities of quantum channels where the set of states used for encoding information is restricted based on various physical properties. Specifically, we provide expressions for the classical capacities of noiseless and noisy quantum channels when the average energy of the encoded ensemble or the energy of each of the constituent states in the ensemble is bounded. In the case of qubit energy-preserving dephasing channels, we demonstrate that  a nonuniform probability distribution based on the energy constraint maximizes capacity, while we derive the compact form of the capacity for equiprobable messages.  We suggest an energy-constrained dense coding (DC) protocol that we prove to be optimal in the two-qubit situation and obtain a closed-form expression for the DC capacity.  Additionally, we demonstrate a no-go result, which states that when the dimension of the sender and the receiver is two, no energy-preserving operation can offer any quantum advantage for energy-constrained entanglement-assisted capacity. We exhibit that, in the energy-constrained situation, classical-quantum noisy channels can show improved capabilities under entanglement assistance, a phenomenon that is unattainable in the unrestricted scenario.

\end{abstract}	

\maketitle

\section{Introduction}
\label{sec:intro}

In the mid-twentieth century, Shannon revolutionized communication technology by linking it with information theory \cite{Shannon1948,Shannon1949,Cover2005}. Interestingly, the last three decades have witnessed a second quantum revolution \cite{Dowling2003,Nielsen2012,Jaeger2018}, advancing quantum technologies such as quantum communication \cite{Bennett1992,teleportation_1993, Ekert91, gisin2002_crypto}, quantum computing capabilities \cite{Grover1996,Shor1997,Bennett2000,one_way_prl2001}, quantum sensing \cite{Giovannetti2006,Degen2017}, surpassing classical methods. In the field of quantum communication, a prominent research direction includes determining the classical capacity of a quantum channel, the highest rate at which classical information may be successfully transferred using quantum resources
\cite{cc1,Schumacher1997, 
Holevo1998,Bennett1999,bennett2002entanglement,cc2, Holevo2020}. Specifically, a shared entangled state, a protocol known as dense coding (DC) \cite{Bennett1992,Bose2000,Hiroshima_2001,Ziman2003,Bruss2004} was demonstrated to bypass the no-go result by Holevo (a bit per qubit)  \cite{Holevo1973}.  More crucially, the implementation of these protocols in laboratories facilitates the theoretical development of quantum communication devices to convey classical information \cite{Mattle1996,Schaetz2004,Banaszek2004,Chiuri2013,Williams2017,Kumar2019}.

The majority of studies on the classical capacity of quantum channels assume ideal conditions such as unrestricted access to encoding states, the ability to utilize the channel multiple times, and access to unlimited shared entanglement (in the case of entanglement-assisted capacities). A few exceptions are the study of single-shot capacities \cite{Wang2012} and capacities with bounded entanglement-assistance \cite{Shor2004,Zhu2017}.

The current theoretical and experimental advances in communication devices necessitate constraints at various stages of information transfer. 
For example, one such physically significant limitation arises from thermodynamics, where the energy of encoded states is finite, limiting the set of states utilized to encode classical signals. Notably, capacities with such constraints have traditionally been considered in continuous variable systems \cite{Braunstein2000,Ban2000,holevo2001evaluating,Ralph2002,aditi2005,Aditi2007,Patra2022,cvqc1}  in order to prevent divergent classical capacities caused by infinite dimensions. 
%. However, such consideration was mostly forced due to the mathematical issue of diverging capacities in infinite dimensions. 
More crucially, this raises an intriguing question: \textit{what impact do energy constraints have on capacities when dealing with finite-dimensional systems?}
 In general,  restrictions on the accessibility of encoding states might arise from other bounded state properties like purity, coherence, magic, etc., depending on the particulars of the physical situation at hand. The primary goal of this work is to determine the classical capacity of quantum channels under the restrictive conditions of the encoding procedure, hence addressing the questions.

 By introducing the notion of the restricted classical capacity, we reveal that, unlike continuous variable systems, there is an emergent energy scale beyond which the energy bound becomes irrelevant. Interestingly, we exhibit that for a noisy channel, especially a dephasing one, the capacities are functions of non-uniform energy-dependent probabilities of the optimal ensemble, resulting in different classical capacities when the energy restrictions are average or strict, or when messages are equiprobable. Beyond a single system, we present our findings in the energy-constrained entanglement-assisted scenario. In particular, we consider an energy-constrained DC protocol that ensures quantum advantage with the help of pre-shared entanglement. This protocol yields a closed-form expression of the DC capacity, which is optimal for two local dimensions. However, in the energy-constrained entanglement-assisted protocol, we prove that passive operations cannot provide any quantum benefit. The energy-constrained framework enables us to demonstrate that a classical-quantum (CQ) channel offers enhanced DC capacity via entanglement, a feature shown to be absent in the unrestricted (conventional) communication paradigm.   

Our paper is organized as follows. The definition of energy-constrained entanglement-unassisted classical capacity of a quantum channel is presented in Sec.~\ref{sec:formulation}. We provide the expressions of the classical capacity for both the noiseless and the noisy cases in Sec.~\ref {subsec:noise_less} and Sec.~\ref{subsec:noisy_channel}, respectively. The energy-constrained DC protocol is proposed in Sec.~\ref{sec:canonical_DC} while Sec.~\ref{subsec:shirokov} highlights the advantage of pre-shared entanglement on the capacity of a CQ channel. We conclude in Sec.~\ref{sec:conclusion}.

\section{Constrained classical capacities}
\label{sec:formulation}

In a classical communication scheme through the quantum state, the encoding of the classical message $ i$ is performed on the state $\rho_i$, which can have a bounded value of some physical or information-theoretic quantity. In a constrained classical information transmission protocol, two constraints can naturally appear -- \((1)\) \emph{average} and \((2)\) \emph{strict}. In particular, for every physical property $\mathcal{P}_k$, when an encoding ensemble of $\{p_i,\rho_i\}$ is used for transmitting classical information, an \emph{average} constraint signifies $\sum_i p_i \mathcal{P}_k(\rho_i) \leq {\tt P}_k.$
%\begin{eqnarray}
%    \sum_i p_i \mathcal{P}_k(\rho_i) \leq {\tt P}_k.
%\end{eqnarray}
When $\mathcal{P}_k$ is a linear functional like energy, i.e., $\mathcal{P}_k(\rho) := \text{Tr}(\mathcal{O}_k \rho)$ for some observable $\mathcal{O}_k$, the average constraint translates to $ \mathcal{P}_k(\sum_i p_i\rho_i = \bar\rho) = \text{Tr}(\mathcal{O}_k \bar\rho)\leq {\tt P}_k$, where $\bar\rho$ is the average ensemble state.
On the other hand, in the \emph{strict} case, the considered physical property $\mathcal{D}_k$  should satisfy $\mathcal{D}_k(\rho_i) \leq {\tt D}_k, \forall i$.
%\sout{for all $k$.}
We are now ready to define the constrained classical capacity of a quantum channel $\Lambda$, where $N_A$ physical properties $\{\mathcal{P}_k\}_{k=1}^{N_A}$ with $\{{\tt P}_k\}_{k=1}^{N_A}$ values are constrained on an average, and $N_S$ physical properties $\{\mathcal{D}_k\}_{k=1}^{N_S}$ having
 $\{{\tt D}_k\}_{k=1}^{N_S}$ are strictly constrained. Therefore, for a total of $N_A+N_S$ constraints, we have
 \begin{equation}
\begin{aligned}
C_{\{{\tt P}_k\}_{\text{A}}\{{\tt D}_k\}_{\text{S}}}(\Lambda):=  & \max_{\{p_i,\rho_i\}} \chi(\{p_i,\Lambda(\rho_i)\})\\
\textrm{subject to}  \quad & \sum_i p_i \mathcal{P}_k(\rho_i) \leq {\tt P}_k, &k = 1, \ldots N_A, \\
& \textrm{and} \quad     \mathcal D_k(\rho_i) \leq {\tt D}_k ~~\forall \rho_i, & k = 1, \ldots N_S. 
\end{aligned}
\label{eq:cc1}
\end{equation}
%Let the first $m$ be average constraints while the remaining $N-m$ are strict constraints.
\iffalse
 \begin{equation}
\begin{aligned}
C_{\{{\tt P}_k\}}(\Lambda):=  & \max_{\{p_i,\rho_i\}} \chi(\{p_i,\Lambda(\rho_i)\})\\
\textrm{subject to}  \quad & \sum_i p_i \mathcal{P}_k(\rho_i) \leq {\tt P}_k, &k = 1, \ldots m, \\
    & \mathcal P_k(\rho_i) \leq {\tt P}_k ~~\forall \rho_i, & k = m+1, \ldots N. 
\end{aligned}
\label{eq:cc1}
\end{equation}
\fi
Here $\chi(\{q_j,\sigma_j\}) = S(\sum_j q_j \sigma_j) - \sum_j q_j S(\sigma_j)$ represents the Holevo capacity \cite{Holevo2020,Schumacher1997,cc1,cc2} of the ensemble $\{q_j,\sigma_j\}$, and $S(\mu) = -$Tr$(\mu \log \mu)$ is the von-Neumann entropy of $\mu$. On several occasions, we exhibit that capacity-maximizing probabilities turn out to be functions of constraints and the quantum channel, i.e., $p_i = f_i(\{{\tt P}_k\},\{{\tt D}_k\},\Lambda)$. Although mathematically optimal, one may argue that this is an unphysical situation since the probabilities of the classical messages are typically independent of the constraints and noisy channel, and hence, in such situations, we additionally focus on a variant of the capacity, with an additional imposition that the encoding probabilities are uniform.

%{\color{blue} we can formally add more constraints about the probabilities of the classical messages (say equiprobable) or the limitation of the alphabet size. Have to discuss this.}

\subsection{Energy-constrained classical capacity of noiseless quantum channel}
\label{subsec:noise_less}

Let us now concentrate on the energy-constrained (both average and strict) classical capacity of noiseless quantum channels. The Hamiltonian for the $d$-level system associated with the sender's system is taken to be 
 %   \begin{eqnarray}
  \(      H = \sum_{n = 0}^{d-1} n \ketbra{n}{n}\).
%        \label{eq:ham}
 %   \end{eqnarray} 
 We can prove the exact form of energy-constrained (EC) classical capacity in the following theorem.
  \begin{theorem}
   \label{theorem:eccp1}
       The energy-constrained classical capacity of a noiseless quantum channel in arbitrary dimension for both average and strict constraints is identical and can be expressed as
       \begin{eqnarray}
             C_{{\{E\}}_{\emph{A(S)}}}(\Lambda = \mathbb I) =:  C_{\{E\}}  = \Bigg\{
        \begin{array}{cc}
           \log_2 d   & E \geq \mathbb E_d, \\ 
            H(\{p_n^E\})  & E < \mathbb E_d,
        \end{array}
        \label{eq:eccc3}
       \end{eqnarray}
where $\mathbb E_d = \frac{d-1}{2}$. Here $H(\{p_x\})= -\sum_x p_x \log_2 p_x$ is the Shannon entropy of the probability distribution $\{p_x\}$,  where
\begin{eqnarray}
    p_n^E = \frac{e^{-n \beta_E}}{\sum_{n=0}^{d-1}e^{-n \beta_E}}, ~~~n = 0, \ldots, d-1,
    \label{eq:thermalcoefficients}
\end{eqnarray}
are the weights of the $d$-dimensional thermal state $\tau_E = \sum_{n=0}^{d-1} p_n^E \ketbra{n}{n},$ with average energy $E$.  Here, $\ket n $ denotes the $n^{\emph{th}}$ eigenstate of the Hamiltonian $H$.
 The inverse temperature, $\beta_E=\frac{1}{K_{B}T_{E}}$ (with \(T_{E}\) and \(K_{B}\) being the temperature and Boltzmann constant respectively) is obtained from $\emph{Tr}(\tau_E H) = \sum_{n=0}^{d-1} np_n^E = E$.
 \iffalse
 \begin{eqnarray}
     \emph{Tr}(\tau_E H) = \sum_{n=0}^{d-1} ne^{-n \beta_E} = E
 \end{eqnarray}
\fi
 \end{theorem}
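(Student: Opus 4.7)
The plan is to reduce the Holevo maximization to a maximum-entropy problem for the ensemble average $\bar\rho$, invoke the Gibbs-state argument to bound $S(\bar\rho)$ under the energy constraint, and then construct explicit ensembles that saturate this bound in both the average and the strict regimes. Since $\Lambda = \mathbb{I}$, the Holevo quantity of any ensemble $\{p_i,\rho_i\}$ satisfies $\chi = S(\bar\rho) - \sum_i p_i S(\rho_i) \leq S(\bar\rho)$ with $\bar\rho = \sum_i p_i \rho_i$, and equality is attained by pure-state ensembles. It therefore suffices to optimize $S(\bar\rho)$ over pure-state ensembles compatible with the constraint.

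For the converse direction, both the average and the strict constraints imply $\mathrm{Tr}(H\bar\rho) \leq E$ (the strict version by summing the per-state bounds with weights $p_i$). The problem thus reduces to maximizing $S(\sigma)$ over density operators with $\mathrm{Tr}(H\sigma) \leq E$. The standard argument using the nonnegativity of the relative entropy, $S(\sigma\|\tau_E) \geq 0$ together with $\log \tau_E = -\beta_E H/\ln 2 - \log_2 Z$, yields $S(\sigma) \leq S(\tau_E) = H(\{p_n^E\})$ whenever $E < \mathbb{E}_d$ (so that $\beta_E \geq 0$); for $E \geq \mathbb{E}_d$, the maximally mixed state $I/d$ lies in the feasible region and trivially maximizes the entropy at $\log_2 d$. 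The two formulas join continuously at $E = \mathbb{E}_d$, where $\beta_E = 0$ and $\tau_E = I/d$.

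Achievability under the average constraint is immediate: the ensemble $\{p_n^E, \ket{n}\}_{n=0}^{d-1}$ is pure, has average $\tau_E$, expected energy $E$, and Holevo quantity $H(\{p_n^E\})$. The core technical obstacle lies in the strict regime, because any eigenstate $\ket{n}$ with $n > E$ violates the per-state cap, and the converse bound $S(\tau_E)$ was derived only from an average-energy inequality, so its attainability under the more restrictive strict cap is not automatic. To close the gap I propose the Fourier-phase ensemble $\{1/d, \ket{\phi_k}\}_{k=0}^{d-1}$ with $\ket{\phi_k} = \sum_{n=0}^{d-1} \sqrt{p_n^E}\, e^{2\pi i k n/d}\, \ket{n}$. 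Each $\ket{\phi_k}$ carries energy $\sum_n n\, p_n^E = E$, saturating but never exceeding the strict bound, while the discrete-Fourier orthogonality $\tfrac{1}{d}\sum_k e^{2\pi i k(n-m)/d} = \delta_{nm}$ forces the average to be $\tau_E$, so the Holevo quantity again equals $H(\{p_n^E\})$. For $E \geq \mathbb{E}_d$ the analogous construction with $p_n^E$ replaced by $1/d$ delivers the Fourier basis, each element of which has energy $\mathbb{E}_d \leq E$ and whose uniform mixture is $I/d$, achieving $\log_2 d$. The random-phase averaging is precisely what resolves the main obstacle: it distributes the energy uniformly across all eigenmodes so that no constituent state exceeds $E$, while simultaneously killing the off-diagonal coherences to reproduce $\tau_E$.
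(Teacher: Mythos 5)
Your proof is correct and follows essentially the same route as the paper's: bound $\chi$ by $S(\bar\rho)$, maximize entropy under the energy constraint via the Gibbs/maximally-mixed state, and saturate with the quantum-Fourier-phased ensemble $\sum_n \omega^{kn}\sqrt{p_n^E}\ket{n}$, each member of which has energy exactly $E$ so that the strict constraint is also met. The only additions beyond the paper's argument are the explicit relative-entropy justification of the Gibbs bound and the remark that the strict constraint implies the average one, both of which are welcome but not a different method.
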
 
\noindent  See Appendix~\ref{app:1} for detailed proof of this Theorem, where we also demonstrate the dimensional advantage of capacity for $E < \mathbb E_d$ in Appendix~\ref{app:dimadv}. From Theorem~\ref{theorem:eccp1}, an interesting observation emerge -- for any dimension $d$, there is an emergent energy scale $E = \frac{d-1}{2}= \mathbb{E}_d$ above which the capacity $C_{\{E\}}$ becomes insensitive to the energy bound. Since $E \geq\mathbb{E}_d$, the set of available ensembles supports a full orthogonal basis where each element of the ensemble satisfies the energy bound, thereby respecting both the average and strict energy bounds. A non-trivial effect of energy constraint on capacity is only observed when $E < \mathbb E_d$. Note that for the infinite-dimensional case $(d \to \infty)$, $\mathbb{E}_d \to \infty$ and hence this energy scale becomes irrelevant. Furthermore, for any finite $d<\infty$, for any Hamiltonian $\tilde{H}$ that satisfies $0 < ||\widetilde H||< \infty$, the emergent energy scale still persists with $\widetilde{\mathbb E}_d = \frac1d$Tr $\widetilde H$.

%\emph{Dimensional Advantage.-} 

%\section{Results}
\subsection{Energy-constrained classical capacity of noisy quantum channels}
\label{subsec:noisy_channel}

We now focus on the energy-constrained classical capacity of noisy quantum channels. In particular, for our investigation, we restrict our attention to energy-preserving channels in $d=2$.
The most general qubit energy-preserving channel is unital \cite{Chiribella2017, Swati2023}, given by
$ \Lambda^{\tt EP}(\rho) = \sum_i p_i \mathcal{U}_i(\rho)$,
where $\mathcal{U}_k(*) \equiv U_k (*) U_k^\dagger$. The property of energy preservability induces a stronger requirement on $U_k$s. 
\begin{lemma}
A qubit energy-preserving channel is unital and can be written as a mixture of unitaries $\{p_k,U_k\}$, where all $U_k$s are energy preserving, i.e., $[U_k,H]=0$ for all $U_k$s.
\label{lemma:epc}
\end{lemma}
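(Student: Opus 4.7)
The plan is to convert the channel-level energy-preservation condition into a pointwise constraint on each unitary in the mixture-of-unitaries representation, which is already granted by the assumption (cited just before the lemma) that qubit energy-preserving channels are unital and therefore admit such a decomposition. I would first pass to the Heisenberg picture: the identity $\mathrm{Tr}(H \Lambda^{\tt EP}(\rho)) = \mathrm{Tr}(H\rho)$ for every state $\rho$ is equivalent to $(\Lambda^{\tt EP})^\dagger(H) = H$, which, upon substituting $\Lambda^{\tt EP}(\rho) = \sum_i p_i U_i \rho U_i^\dagger$, becomes the operator identity
\begin{equation*}
\sum_i p_i\, U_i^\dagger H U_i \;=\; H.
\end{equation*}

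The main step exploits the qubit structure. For $d = 2$, $H = \ketbra{1}{1}$ is a rank-one projector, so each $P_i := U_i^\dagger H U_i$ is unitarily equivalent to $H$ and hence also a rank-one projector, $P_i = \ketbra{\psi_i}{\psi_i}$. The operator identity above then expresses the pure state $\ketbra{1}{1}$ as a convex combination $\sum_i p_i \ketbra{\psi_i}{\psi_i}$ of pure states. I would invoke the standard uniqueness-of-ensemble property for pure states: testing on any $\ket{\phi^\perp}$ orthogonal to $\ket{1}$ gives $\sum_i p_i |\braket{\phi^\perp}{\psi_i}|^2 = 0$, which forces $\ket{\psi_i} \propto \ket{1}$ whenever $p_i > 0$. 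Hence $U_i^\dagger H U_i = H$, meaning $\ket{1}$ is an eigenvector of $U_i$; unitarity then implies that $\ket{0}$ is also an eigenvector, so $U_i$ is diagonal in the energy eigenbasis, i.e.\ $[U_i, H] = 0$.

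I do not foresee a serious technical obstacle: once the Heisenberg-picture reformulation is in place, the remainder is a one-line linear-algebra argument. The only point that deserves flagging is that the mixture-of-unitaries decomposition of $\Lambda^{\tt EP}$ is not unique, so one might a priori worry whether the $[U_i, H] = 0$ property depends on the chosen decomposition. The argument above actually shows the stronger statement that \emph{every} such decomposition of $\Lambda^{\tt EP}$ consists of energy-preserving unitaries in $d = 2$. This rigidity is special to the qubit case, since it is driven by $H$ being effectively a rank-one projector; in higher dimensions $H$ has multiple distinct eigenvalues and the same line of reasoning would break down, so extending the lemma beyond $d=2$ would require a genuinely different approach.
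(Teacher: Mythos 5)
Your proof is correct and follows essentially the same route as the paper's: pass to the Heisenberg picture to get $\sum_i p_i U_i^\dagger H U_i = H$, use that $H=\ketbra{1}{1}$ is a rank-one projector so this writes a pure state as a convex combination of pure states, invoke extremality to force $U_i^\dagger\ket{1}\propto\ket{1}$, and conclude $[U_i,H]=0$; the unitality and mixture-of-unitaries decomposition are taken from the cited references in both cases. Your explicit test against vectors orthogonal to $\ket{1}$ merely spells out the extremality step the paper asserts, and your remark that the conclusion holds for every such decomposition is a correct (if unstated in the paper) strengthening.
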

\noindent The details of the proof are provided in Appendix~\ref{app:lemma1}.

The most general form of a qubit energy-preserving unitary for the Hamiltonian is $U = \cos \theta ~\mathbb I +i \sin \theta ~\sigma_z$, where $\sigma_z$ is the Pauli $Z$ operator. 
Here, without loss of any generality, the $z$-direction is taken to be along the eigenbasis of the system Hamiltonian.
This dictates the form of a qubit energy-preserving channel to be 
\begin{eqnarray}
    \Lambda^{\tt EP}(\rho)  = \sum_j q_j \mathcal{U}_j(\rho),
\end{eqnarray}
with $U_k =\cos \theta_k ~\mathbb I +i \sin \theta_k \sigma_z = e^{i\cos \theta_k \sigma_z}$, and $\sum_k q_k = 1$.
Let us briefly discuss a special case of energy-preserving channels, the dephasing channel $\Lambda^{\tt Deph}_\lambda$ whose action on an arbitrary state $\rho$ is given by 
\(\Lambda^{\tt Deph}_\lambda(\rho) = \lambda \rho + (1-\lambda)\sigma_z \rho \sigma_z\). In the unrestricted framework of communication,
classical capacity of $\Lambda^{\tt Deph}_\lambda$ irrespective of the value of $\lambda$, is given by \( C(\Lambda^{\tt Deph}_\lambda) = 1 ~~\forall \lambda \in [0,1]\), 
since $\Lambda^{\tt Deph}_\lambda$ preserves the $\{\ket{0}, \ket{1}\}$-subspace. 

Now, the question is whether the noisy classical capacities change in the presence of an energy constraint. The energy-constrained capacity of $\Lambda^{\tt Deph}_\lambda$ under the average energy constraint for $E \geq \frac12$, remains unity, $C_{\{E\}_{\text{A}}}(\Lambda^{\tt Deph}_\lambda) = 1$ $\forall\lambda$. For $E < \frac12$, we recover the noiseless value of  $C_{\{E\}_{\text{A}}}(\Lambda^{\tt Deph}_{\lambda})= H(E)$. Note that there is a crucial difference between the two. In the noisy case, the optimal ensemble, $$\{p_0 = 1-E, \rho_0 = \ketbra{0}{0}; p_1 = E, \rho_1 = \ketbra{1}{1} \},$$ for achieving this capacity is unique while this is not the case in the noiseless scenario. Quite remarkably, note that a non-uniform probability distribution maximizes the capacity, and the probabilities are dependent on the energy constraint. 
Typically, the classical messages come from independent sources, whereas extracting the maximal capacity requires the message probabilities to correlate perfectly with the energy constraint, thereby failing to capture the physically relevant situation. 
%Therefore, it may be argued that the obtained capacity fails to capture the physically relevant situation where the classical messages come from an independent source. 

Motivated by the above observation, we introduce an alternative version of capacity, alongside the standard constrained capacity in Eq.~\eqref{eq:cc1}, wherein the classical messages are assumed to be equiprobable.
% \ph{\sout{In order to overcome this situation, we forcibly avoid such unrealistic energy-dependent probabilities by choosing the classical messages equiprobable. Again,}} 
%When the signal states are considered equiprobable, 
In this case, for $E \geq \frac12$,  the capacity remains unity like before. However, for
$E < \frac12$, the capacity can be expressed as the Holevo capacity of the ensemble $\{\frac12,  \Lambda^{\tt Deph}_\lambda(\rho_+);\frac12, \Lambda^{\tt Deph}_\lambda(\rho_-)\}$, where the pure states  $\rho_\pm$, used for encoding the equiprobable messages are
\begin{eqnarray}
    \rho_\pm = {\tt Proj}\Big(\sqrt{1-(E \pm \Delta)}\ket{0} \pm \sqrt{E \pm \Delta}\ket{1}\Big).
\end{eqnarray}
The optimal value of $\Delta$ depends on $\lambda$ and $E$. (see Appendix~\ref{app:3}).
Notably, a closed-form expression of the EC capacity for the complete dephasing channel (which corresponds to $\lambda = \frac12$) can be obtained as
  \begin{eqnarray}
             \widetilde{C}_{\{E\}_{\text{A}}}(\Lambda^{\tt Deph}_{\lambda = \frac12}) = \Bigg\{
        \begin{array}{cc}
           1   & E \geq \frac12, \\ 
            H(E) - \frac{H(2E)}{2}  & E < \frac12.
        \end{array}
        \label{eq:ecccdpc1}
       \end{eqnarray}
       Here, and throughout the manuscript, we will use $\widetilde{C}$ to denote capacities in the setting of equiprobable messages. 
% However, this inequivalence is only observed for $E < \frac12$. 

When one considers the strict energy constraint, an even stronger separation appears, where for all $E$ $<1$, we get a strictly lower value of the capacity compared to the noiseless case. In particular, for $\lambda = \frac12$, we have
 \begin{eqnarray}
      C_{\{E\}_{\text{S}}}(\Lambda^{\tt Deph}_{\lambda = \frac12}) \geq  \widetilde{C}_{\{E\}_{\text{S}}}(\Lambda^{\tt Deph}_{\lambda = \frac12}) =
            H\Big(\frac E2 \Big) - \frac{H(E)}{2}.
        \label{eq:ecccdpc2}
 \end{eqnarray}
 The equality is achieved when \( E = 1 \) for any value of \( \lambda \), and for all values of \( E \) when \( \lambda = 0 \) or \( \lambda = 1 \) (the noiseless case). In the noisy case $(\lambda \neq 0,1)$, for $E<1$, even in the strict case, the capacity maximizing probabilities are non-uniform, and hence $C_{\{E\}_{\text{S}}}(\Lambda^{\tt Deph}_{\lambda}) >  \widetilde{C}_{\{E\}_{\text{S}}}(\Lambda^{\tt Deph}_{\lambda})$ and the corresponding optimal probabilities are not expressible as neatly as in the case of average energy constraint
% {\color{red}Incidentally, the optimal encoding probabilities, in this case, turn out to be uniform.}
(see Appendix~\ref{app:3} for details including plot of $C_{\{E\}_{\text{S}}}(\Lambda^{\tt Deph}_\lambda)$ against $\lambda$ for different values of $E$). We find that a similar analysis can be carried out when two constraints, like average energy and purity, can be involved for obtaining classical capacity (see Appendix~\ref{app:dual}).  Here, two distinct energy scales emerge, splitting the available energy range into three markedly different domains characterized by qualitatively different behaviors of capacity.

It is important to note here that in the unrestricted scenario, a perfect qubit channel's communication utility (in terms of classical capacity) was identical with all channels $\Lambda^{\tt Deph}_\lambda$. Such an equivalence breaks down in the energy-constrained setting, especially when a strict energy constraint is imposed and the messages are equiprobable.  \\
%{\color{red}need to split this line.}

\section{Energy-constrained Dense coding: \\ more capacity with less entanglement}
\label{sec:canonical_DC}

Pre-shared entanglement has been demonstrated to enhance the communication utility of quantum channels for sending classical information, known as quantum dense coding \cite{Bennett1992, Hiroshima_2001,Ziman2003, Bruss2004}.
In this protocol, the sender encodes the message, $i$, on one part of the shared entangled state $\rho_{SR}$ using a unitary operation $\mathcal{U}_i$, the set of states $(\mathcal U_i \otimes \mathbb I) \rho_{SR}:= \rho_{SR}^i$ with the marginal state at sender's node $\rho_S^i:= $Tr$_R(\rho_{SR}^i)$. The  constrained DC capacity for an arbitrary channel $\Lambda$ can be expressed as
\begin{equation}
\begin{aligned}
C^{\tt DC}_{\{{\tt P}_k\}_{\text{A}}\{{\tt D}_k\}_{\text{S}}}(\Lambda):=  & \max_{\{p_i,\mathcal{U}_i,\rho_{SR}\}} \chi(\{p_i,\Lambda(\rho^i_{SR})\})\\
\textrm{subject to}  \quad & \sum_i p_i \mathcal{P}_k(\rho^i_S) \leq {\tt P}_k, ~~~~k = 1, \ldots N_A, \\
    & \mathcal D_k(\rho^i_S) \leq {\tt D}_k ~~\forall \rho_i, ~~~~~~~k = 1, \ldots N_S. 
\end{aligned}
\label{eq:dccc}
\end{equation}
For the noiseless case with average energy constraint, $C^{\tt DC}_{\{{\tt P}_k\}_{\text{A}}\{{\tt D}_k\}_{\text{S}}} := C^{\tt DC}_{\{{\tt P}_k\}_{\text{A}}\{{\tt D}_k\}_{\text{S}}}(\Lambda = \mathbb I)$. Note that, in principle, we are left with an optimization problem whose parameter space grows as $\mathcal O(d^4).$ %\ph{We have to mention that, we have reduced this order to $\mathcal O(d)$.}

\begin{proposition}
\label{obs:dce1}
  The original DC strategy without constraints violates the energy bound for $E < \mathbb{E}_d$ and, hence, is not suitable for implementing an energy-constrained DC protocol in this regime. 
  %{\color{blue} Mention the equiprobable feature! Try this to dense code with unequal probabilities!}
\end{proposition}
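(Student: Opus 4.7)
The plan is to identify precisely what the ``original DC strategy without constraints'' means and then to show that its sender-side marginals sit at a fixed energy $\mathbb{E}_d$, which by definition exceeds any $E < \mathbb{E}_d$. Concretely, the standard protocol uses a maximally entangled resource $\ket{\Phi} = \frac{1}{\sqrt d}\sum_{n=0}^{d-1}\ket{n}\ket{n}$ shared between sender and receiver, together with $d^2$ equiprobably chosen Heisenberg--Weyl unitaries $\{U_i\}$ applied on the sender's half to encode the $d^2$ messages, yielding the post-encoding states $\rho_{SR}^i = (U_i\otimes \mathbb I)\ketbra{\Phi}{\Phi}(U_i^\dagger \otimes \mathbb I)$.

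The first computational step is to evaluate the sender's marginal. Since $\ket\Phi$ is maximally entangled, $\text{Tr}_R(\ketbra{\Phi}{\Phi}) = \mathbb I/d$, and a local unitary preserves this reduced state: $\rho_S^i = U_i (\mathbb I/d) U_i^\dagger = \mathbb I/d$ for every $i$. The next step is to evaluate the energy of this marginal with respect to $H = \sum_{n=0}^{d-1} n \ketbra{n}{n}$, which gives
\begin{equation}
\mathcal{P}(\rho_S^i) = \text{Tr}(H \rho_S^i) = \frac{1}{d}\sum_{n=0}^{d-1} n = \frac{d-1}{2} = \mathbb{E}_d,
\end{equation}
independently of $i$. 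Both the strict constraint $\mathcal{D}(\rho_S^i) \leq E$ for all $i$ and the average constraint $\sum_i p_i \mathcal{P}(\rho_S^i) \leq E$ therefore collapse to the single inequality $\mathbb{E}_d \leq E$. For every $E < \mathbb{E}_d$ this inequality fails, so the original DC strategy is infeasible in that regime, establishing the proposition.

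There is essentially no technical obstacle beyond pinning down the conventional unconstrained protocol; the argument uses only two elementary facts—the maximal-mixedness of the reduced state of a maximally entangled pure state, and invariance of that reduced state under local unitaries. The only subtlety worth flagging is that the conclusion is independent of the particular choice of encoding basis $\{U_i\}$ and of the probability weights $\{p_i\}$, so long as the shared state is maximally entangled; consequently the obstruction is structural rather than a deficiency of a specific encoding, which is precisely why it motivates the modified energy-constrained DC construction developed in the subsequent section.
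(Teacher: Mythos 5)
Your computation is correct and establishes the proposition for the canonical protocol as literally stated, but it is strictly narrower than the paper's argument, and the difference matters downstream. You fix the shared state to be maximally entangled, observe that its sender marginal is $\mathbb I/d$ and is preserved by local unitaries, and conclude the energy is pinned at $\mathbb E_d$. The paper instead allows an \emph{arbitrary} pre-shared state $\rho_{SR}$, even one deliberately chosen with $\mathrm{Tr}(\rho_S H) < \mathbb E_d$, and uses the twirling identity $\frac{1}{d^2}\sum_{k,m} U_{km}\,\rho_S\,U_{km}^\dagger = \mathbb I/d$ for the Heisenberg--Weyl set to show that the \emph{equiprobable encoding itself} forces the average transmitted state to be maximally mixed, hence of energy $\mathbb E_d$, regardless of the resource. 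This stronger version is the one actually invoked in Protocol~\ref{protocol:ecdc}, Case~1, where the shared state is already taken to be the low-energy state $\ket{\psi_{SR}^{E'}}$ and the proposition is cited to justify abandoning uniform probabilities in favor of the weights $q_k/d$; your version would not support that inference. Relatedly, your closing remark gets the relevant robustness backwards: the useful statement is not that the violation is independent of the unitaries and probabilities so long as the state is maximally entangled, but that it is independent of the shared state so long as the unitaries are the full Weyl set applied with equal probabilities. Your approach buys a shorter argument using only elementary facts about maximally entangled states; the paper's buys the generality needed to pin the blame on the uniform probability distribution rather than on the resource state.
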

\begin{proof}
% Consider a bipartite system $\ket{\psi_{SR}}$ with local dimension $d$. 
The original DC protocol \cite{Hiroshima_2001} in dimension $d$ involves the sender to encode classical messages of alphabet size $d^2$, 
denoted by $\{k,m\}_{k,m = 0}^{d-1}$, using local unitaries $U_{km} = X^kZ^m$ composed of the shift $X$ and phase operators $Z$ with equal probabilities. The action of these unitaries on the arbitrary energy eigenstate $\ket n$ is given by \( X \ket n = \ket{n+1 ~\rm{mod} ~d}, \quad Z \ket n = \omega^{n} \ket n,\) where $\omega = e^{2\pi i/d}$ is the $d^{th}$ root of unity.
Even when the pre-shared entangled state $\rho_{SR}$ is chosen such that Tr$(\rho_SH)< \mathbb E_d$, the average state transmitted by the sender after encoding is $\bar \rho_S = \frac{1}{d^2}\sum_{k,m=0}^{d-1} U_{km} \rho_S U_{km}^\dagger = \frac{\mathbb I}{d}$, whose energy is Tr$(\frac 1 d H) = \frac{d-1}{2} = \mathbb E_d$, thereby violating the energy bound.
\end{proof}

\begin{protocol}\emph{(Energy-constrained dense coding.)}
Consider a bipartite state, $\ket{\psi_{SR}}$, with local dimension $d$. The sender encodes $d^2$ classical messages $\{k,m\}_{k,m = 0}^{d-1}$ via local unitaries $U_{km} = X^kZ^m$ as used in the unrestricted setting. The energy-constrained protocol differs from the usual one in the choice of the encoding probabilities and the shared state. \\
\emph{Case 1.}
For $E < \mathbb E_d$, the shared state is chosen to be of the form $\ket{\psi_{SR}^{E'}} = \sum_{n=0}^{d-1} \sqrt{p_n^{E'}}\ket{n,n}$, where $p_n^{E'}$ are the weights of a local thermal state with average energy $E'$, see Eq.~\eqref{eq:thermalcoefficients}. From the average energy constraint, $E' \leq E$.
Now, from the Proposition. \ref{obs:dce1},  equiprobable encoding fails to satisfy the energy bound. Hence, we adopt the following strategy: All encoding unitaries $V_k := \{U_{km}\}_{m=0}^{d-1}$ are implemented with a probability of $\frac{q_k}{d}$, where $\sum_{k=0}^{d-1}q_k = 1$. Here, $\{q_k\}$ and $E'$ have to be optimally chosen depending on the value of the energy constraint $E$.  \\
\emph{Case 2.} For $E \geq \mathbb E_d$, the usual DC protocol with equiprobable messages is employed with a maximally entangled shared state, $\ket{\psi_{SR}} = \frac{1}{\sqrt{d}}\sum_{n=0}^{d-1}\ket{n,n}$ to obtain the maximum capacity.
\label{protocol:ecdc}
\end{protocol}
Some features of this protocol can immediately be noted: First, the particular choice of $\ket{\psi_{SR}^{E'}}$ is motivated by the fact that for a given average energy $E'$ of the sender's subsystem, the state with the maximal local entropy (entanglement) is a pure state whose local marginal is a thermal state with an average energy equal to $E'$.  Furthermore, the protocol involves grouping
the total of \( d^2 \) unitaries into \( d \) distinct sets, denoted as \( V_k \), each comprising \( d \) unitaries. Note that, \( V_0 \) represents the passive unitary set, while all \( V_k \) for  \( k \geq 1 \) are active unitary sets which increase energy during encoding. For all encoding unitary $U_{km}\in V_k$, the energy of the marginal state of the sender is identically altered. In particular, for $\ket{\psi_{SR}} = \sum_{n=0}^{d-1} \sqrt{p_n^{E'}}\ket{n,n}$, we have 
\begin{eqnarray}
    \text{Tr}_S \big(\rho_S^{km} H \big) = \sum_{n=0}^{d-1} (n+k ~\text{mod}  ~d)p_n^{E'} =:  E'_k ~~\forall m,
\end{eqnarray}
where $\rho_S^{km} = \text{Tr}_R \big( (U_{km} \otimes \mathbb I)\ket{ \psi_{SR}^{E'}} \big)$. 
Lastly, for $E < \mathbb E_d$ the expression of the capacity can be obtained via a constrained optimization over $d$ parameters, rendering the dimension of the parameter space to $\mathcal{O}(d)$.
Now we are set to present the achievable capacity by employing Protocol.~\ref{protocol:ecdc}.

\begin{theorem}
\label{th:ecdcc}
 The average energy-constrained DC capacity can be compactly expressed as
\begin{eqnarray}
              C_{\{E\}_{\emph{A}}}^{{\tt DC}} =  \Bigg\{
        \begin{array}{cc}
           2 \log_2d   & E \geq \mathbb E_d \\ 
       \mathcal{C}(E)  & E < \mathbb E_d,
        \end{array}
\end{eqnarray}
where
\begin{equation}
\begin{aligned}
\mathcal C(E)=  & \max_{\{q_k\}_{k=0}^{d-1},E^\prime} H(\{q_k\}) + {\tt Ent}\Big(\ket{\psi_{SR}^{E'}}\Big)\\
\textrm{subject to}  \quad & \sum_{k=0}^{d-1} q_k E_k' \leq E,
\end{aligned}
\label{eq:complementary}
\end{equation}  
where $E_k' = \sum_{n=0}^{d-1} (n+k ~\emph{mod}  ~d)p_n^{E'}$, with $E'_0 = E'$. The form of $\ket{\psi_{SR}^{E'}}$ is given in Protocol.~\ref{protocol:ecdc}.
%Here $E^\prime$ is the energy of the marginal state of the system of shared state that, following Protocol. \ref{protocol:ecdc}, is taken to be of the  following form: $   \ket{\psi_{SR}^{E'}} = \sum_{n=0}^{d-1} \sqrt{p_n^{E'}}\ket{n,n}$, where $E'\leq E$ is maintained to respect the average energy bound. We also have $E_k' = \sum_{n=0}^{d-1} (n+k ~\emph{mod}  ~d)p_n^{E'}$ allowing us to identify $E'_{k = 0} = E'.$
\end{theorem}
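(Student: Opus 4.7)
The plan is to establish the claimed formula for $C^{\tt DC}_{\{E\}_{\text{A}}}$ in the two regimes $E \geq \mathbb{E}_d$ and $E < \mathbb{E}_d$ by separately verifying achievability via Protocol~\ref{protocol:ecdc} and the matching converse.

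For $E \geq \mathbb{E}_d$, the argument is short. Case~2 of Protocol~\ref{protocol:ecdc} is the standard Hiroshima dense-coding protocol: a maximally entangled shared state encoded with equiprobable Pauli-like unitaries $\{X^k Z^m\}$ yields a Holevo rate of $2\log_2 d$. Its sender-side marginal is $\mathbb{I}/d$ with energy exactly $\mathbb{E}_d \leq E$, so the average energy constraint is satisfied and hence $C^{\tt DC}_{\{E\}_{\text{A}}} \geq 2\log_2 d$. The matching upper bound is inherited from the unconstrained noiseless DC capacity, which also equals $2\log_2 d$.

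For $E < \mathbb{E}_d$, I would first verify achievability by evaluating the Holevo quantity of the ensemble $\{q_k/d,\, (U_{km}\otimes \mathbb{I})\ket{\psi_{SR}^{E'}}\}_{k,m}$ produced by Case~1 of the protocol. Every ensemble state is pure, so $\chi$ collapses to $S(\bar\rho_{SR})$, and the averaging splits cleanly: summing uniformly over $m$ implements the dephasing $\frac{1}{d}\sum_m Z^m(\cdot)Z^{-m}$ on the sender, which reduces $\ketbra{\psi_{SR}^{E'}}{\psi_{SR}^{E'}}$ to $\sum_n p_n^{E'}\ketbra{n,n}{n,n}$ thanks to the Schmidt structure of $\ket{\psi_{SR}^{E'}}$ in the energy basis. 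The subsequent $X^k$ relabels the sender indices by a cyclic shift, and mixing over $k$ with weights $q_k$ gives a state with eigenvalues $\{q_k p_n^{E'}\}_{k,n}$. Consequently $S(\bar\rho_{SR}) = H(\{q_k\}) + H(\{p_n^{E'}\}) = H(\{q_k\}) + {\tt Ent}(\ket{\psi_{SR}^{E'}})$, and the average-energy constraint becomes $\sum_k q_k E'_k \leq E$ exactly as recorded in the protocol. Maximizing over $\{q_k\}$ and $E'$ yields $\mathcal{C}(E)$ and establishes $C^{\tt DC}_{\{E\}_{\text{A}}} \geq \mathcal{C}(E)$.

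The converse $C^{\tt DC}_{\{E\}_{\text{A}}} \leq \mathcal{C}(E)$ is where the real work lies and is the main obstacle. In the two-qubit setting, where the paper explicitly claims optimality, my plan is to take an arbitrary admissible triple $(\{p_i\},\{U_i\},\rho_{SR})$ and reduce it, without decreasing the Holevo quantity or violating the energy bound, to the protocol's canonical form in three moves: first, purify the shared state, which leaves the Holevo quantity and the sender-side energy unchanged; second, apply energy-preserving local unitaries (in the spirit of Lemma~\ref{lemma:epc}) to align the Schmidt basis with the energy basis, and then invoke a majorization argument to show that among pure states of a given sender-side energy, those with thermal Schmidt weights $\{p_n^{E'}\}$ maximize the marginal entropy and hence the available dense-coding rate; and third, twirl the encoding over the Pauli set $\{X^k Z^m\}$, using the fact that this set is a unitary design for qubit marginals, so the twirl can only increase $\chi$ while preserving the energy budget. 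Step two is the most delicate, and it is exactly the point at which the argument becomes subtle for $d > 2$, which is consistent with the paper reserving the fully general optimality claim for the two-qubit case.
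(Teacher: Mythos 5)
Your achievability argument for both regimes is essentially identical to the paper's proof in Appendix~\ref{app:proofDCd}: the ensemble is pure so $\chi$ collapses to $S(\bar\rho_{SR})$, the uniform sum over $m$ dephases the sender to the Schmidt-diagonal form, the $X^k$ shift with weights $q_k$ produces eigenvalues $\{q_k p_n^{E'}\}$, and the entropy factorizes into $H(\{q_k\})+{\tt Ent}(\ket{\psi_{SR}^{E'}})$ with the energy budget $\sum_k q_k E_k'\leq E$. That is all the paper actually proves for this theorem; despite the notation $C^{\tt DC}_{\{E\}_{\text{A}}}$, the statement is introduced as ``the achievable capacity by employing Protocol~\ref{protocol:ecdc},'' and the question of whether the protocol attains the supremum in Eq.~\eqref{eq:dccc} is explicitly deferred to Proposition~\ref{prop:optimaldcd=2}, which is settled only for $d=2$ and only by numerical optimization (ISRES over general encoding unitaries, Appendix~\ref{app:proofofoptimaldcd=2}). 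So the converse you describe as ``where the real work lies'' is work the paper does not do analytically, and you are right to flag it as the delicate part.

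That said, your sketched converse has a concrete flaw in its third move: twirling the encoding over the full set $\{X^kZ^m\}$ with uniform weights drives the sender's average marginal to $\mathbb I/d$, whose energy is exactly $\mathbb E_d$ --- this is precisely the content of Proposition~\ref{obs:dce1} --- so for $E<\mathbb E_d$ the twirl does \emph{not} preserve the energy budget, and the claim that it ``can only increase $\chi$ while preserving the energy budget'' fails. Any honest converse must instead optimize over non-uniform weights on the twirling set, which is exactly the optimization $\mathcal C(E)$ already encodes; showing that nothing outside this family does better is the open analytic gap for $d>2$. If you restrict the converse claim to $d=2$ and replace the twirl step with a direct argument (or accept the paper's numerical verification), the remainder of your proposal is sound.
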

\noindent The detailed proof is furnished in Appendix \ref{app:proofDCd}. Note that the DC capacity without the energy constraint also reads  \(C(|\psi_{SR}\rangle)= \log_{2}d + \tt Ent (|\psi_{SR}\rangle)\) where \(d\) is the dimension of the sender's side.

Since $V_0$ is the set of passive unitaries, the value of $H(\{q_k\})$ with $q_0\neq 1$ gauges the proportion with which the active unitaries are employed in the encoding process. The fact that $\mathcal{C}(E)$ can be expressed as the sum of an activity component $H(\{q_k\})$, and the entanglement entropy of the shared state ${\tt Ent}\Big(\ket{\psi_{SR}^{E'}}\Big)$, the maximization in Eq.~\eqref{eq:complementary} aims to achieve an optimal balance (trade-off) of shared entanglement and the activity of the encoding unitaries such that the energy bound is respected. Notice that the capacity-maximizing shared state possesses entanglement strictly less than what the given energy bound supports. It follows from the observation that when one considers $\ket{\psi_{SR}^{E'=E}}$, it forces all the encoding unitaries to be passive, i.e., $H(\{q_k\}) =0$. This immediately compels $\mathcal{C}(E) = {\tt Ent}\Big(\ket{\psi_{SR}^{E'=E}}\Big) = H(\{p_n^E\})$, i.e., the classical bound (see Eq. \eqref{eq:eccc3}), thereby wiping out the possibility of obtaining any quantum advantage. Since the entanglement of the optimal state is less than what the given energy bound supports maximally, we dub this phenomenon  \emph{more capacity with less entanglement}.

% \sout{The fact that $\mathcal{C}(E)$ can be expressed as the sum of an activity component $H(\{q_k\})$ gauging the proportion with which active unitaries are employed in the encoding process, and the entanglement entropy of the shared state ${\tt Ent}\Big(\ket{\psi_{SR}^{E'}}\Big)$ allows for a nice physical interpretation of the capacity.
% In particular, the maximization aims to achieve an optimal balance of shared entanglement and the activity of the encoding unitaries such that the energy bound is respected. 
% Since the entanglement of the optimal state is less than what the given energy bound supports maximally, we dub this feature as \emph{more capacity with less entanglement.} The fact that the capacity maximizing shared state possesses entanglement strictly less than what the given energy bound supports follows from the observation that when one considers $\ket{\psi_{SR}^{E'=E}}$ it forces all the encoding unitaries to be passive, i.e., $H(\{q_k\}) =0$. This immediately compels $\mathcal{C}(E) = {\tt Ent}\Big(\ket{\psi_{SR}^{E'=E}}\Big) = H(\{p_n^E\})$, i.e., the classical bound (see Eq. \eqref{eq:eccc3}), thereby wiping out the possibility of obtaining any quantum advantage.}

Finally, a crucial question remains whether the Protocol.~\ref{protocol:ecdc} is optimal for achieving the energy-constrained dense coding capacity in Eq.~\eqref{eq:dccc}. We address this for $d=2$ as the local dimension of the sender and receiver, in the subsequent section. 

%\ph{(1) include dimensional advantage of DC. (2) whether we want to mention some bound of DCC in $d=\infty$??} -- Here $C = \log_2(1+E + E^2)$ [BK-protocol]--cannot give....do not respect the energy bound.

\subsection{Energy-constrained dense coding for two-qubits}

The mathematical simplifications on offer in $d = 2$ allow us to probe energy-constrained DC in greater detail. %\ph{\sout{In particular, we widen our scope and pick up the cases of equiprobable messages and strict energy constraints in noiseless and noisy scenarios.}} 
Before presenting our results for the optimality of Protocol. \ref{protocol:ecdc} for $d=2$, we first furnish an important no-go result. 
\begin{theorem}
    No energy-preserving (passive) operation can provide any quantum advantage for energy-constrained entanglement-assisted capacity for $d=2$. 
    \label{theorem:passive}
\end{theorem}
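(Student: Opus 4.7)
The plan is to exploit the structure of $d=2$ passive unitaries guaranteed by Lemma~\ref{lemma:epc} and to show that the Holevo quantity of any ensemble encoded with such unitaries is bounded by a Shannon entropy fixed by the sender's energy, so that the entanglement-assisted capacity never exceeds the unassisted $C_{\{E\}}$ of Theorem~\ref{theorem:eccp1}.

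First, by Lemma~\ref{lemma:epc}, every allowed encoding unitary takes the form $U_k = e^{i\theta_k \sigma_z}$, diagonal in the energy eigenbasis. It suffices to consider pure shared states (any mixed $\rho_{SR}$ can be purified onto the receiver side without loss), and I would expand such a state in the $S$-energy basis as $\ket{\psi_{SR}} = \ket{0}\ket{r_0} + \ket{1}\ket{r_1}$ with $\braket{r_n|r_n}=p_n$, treating the $\ket{r_n}$ as (possibly non-orthogonal) receiver kets. Crucially, the vectors $\ket{e_n} := \ket{n}\ket{r_n}/\sqrt{p_n}$ still form an orthonormal pair on $SR$ because the $S$-factor supplies orthogonality. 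Each encoded state $\ket{\psi_k} = (U_k\otimes \mathbb I)\ket{\psi_{SR}}$ therefore lives in $\mathrm{span}\{\ket{e_0},\ket{e_1}\}$ and, in this basis, reads $\bigl(e^{i\phi_0^k}\sqrt{p_0},\, e^{i\phi_1^k}\sqrt{p_1}\bigr)$, where $\phi_n^k$ is the eigenphase picked up from $U_k$. Since every signal is pure, the Holevo functional collapses to $\chi = S(\bar\rho_{SR})$ with $\bar\rho_{SR}=\sum_k q_k \ketbra{\psi_k}{\psi_k}$.

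I would then diagonalize $\bar\rho_{SR}$ in $\{\ket{e_0},\ket{e_1}\}$: it is a $2\times 2$ matrix with diagonal $(p_0,p_1)$ and off-diagonal $\sqrt{p_0 p_1}\,\gamma$, where $\gamma=\sum_k q_k e^{i(\phi_0^k-\phi_1^k)}$ satisfies $|\gamma|\leq 1$. Its eigenvalues are $\tfrac{1}{2}\bigl(1\pm\sqrt{(p_0-p_1)^2+4 p_0 p_1 |\gamma|^2}\bigr)$, so the binary entropy is monotone decreasing in $|\gamma|$, and $S(\bar\rho_{SR})\leq H(\{p_0,p_1\})$, with equality at $|\gamma|=0$ (attainable, e.g., with $\theta_k\in\{0,\pi/2\}$ and equal weights). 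Passivity now enters decisively: the sender's marginal $\rho_S^k = \mathrm{Tr}_R\,\ketbra{\psi_k}{\psi_k} = p_0\ketbra{0}{0}+p_1\ketbra{1}{1}$ is independent of $k$, so both the average and the strict energy constraints reduce to the single inequality $p_1\leq E$. Combining these, $\chi\leq H(\min(E,\mathbb E_2))$, which exactly matches $C_{\{E\}}$ from Theorem~\ref{theorem:eccp1} and rules out any entanglement-assisted enhancement.

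The main obstacle, as I see it, is cleanly handling the non-orthogonality of $\ket{r_0}$ and $\ket{r_1}$ (the Schmidt basis on $S$ need not coincide with the energy basis in which the passive unitaries are diagonal), but this is bypassed by the observation that $\{\ket{e_0},\ket{e_1}\}$ is automatically orthonormal on the joint system. Degenerate cases ($p_0=0$ or $p_1=0$) make the entropy vanish trivially, and mixed shared states are subsumed by purification onto the receiver. Beyond these steps everything reduces to a short spectral-cum-convexity computation, with the conclusion that passive encoding at best saturates the unassisted classical bound of Theorem~\ref{theorem:eccp1}.
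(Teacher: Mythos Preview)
Your proposal is correct and follows the same core strategy as the paper: confine all passively encoded states to a two-dimensional subspace and bound the Holevo quantity by the unassisted capacity of Theorem~\ref{theorem:eccp1}. The paper works with the span of $\{\ket{\psi_{SRR'}},(\sigma_z\otimes\mathbb I)\ket{\psi_{SRR'}}\}$ and simply asserts that ``the two-dimensionality of the encoded ensemble implies Theorem~\ref{theorem:eccp1} for $d=2$ is applicable''; your choice to expand in the $S$-energy basis and compute the $2\times2$ average state explicitly makes the link between its diagonal $(p_0,p_1)$ and the sender's energy constraint transparent, which is arguably cleaner than the paper's somewhat abstract invocation of Theorem~\ref{theorem:eccp1}.

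Two small points to tighten. First, the formula $\rho_S^k = p_0\ketbra{0}{0}+p_1\ketbra{1}{1}$ is not quite right when $\braket{r_0|r_1}\neq 0$: off-diagonal terms $e^{2i\theta_k}\braket{r_1|r_0}\ketbra{0}{1}+\text{h.c.}$ survive, so $\rho_S^k$ does depend on $k$. What is $k$-independent is the \emph{energy} $\text{Tr}(\rho_S^k H)=p_1$, and that is all you actually use, so the conclusion stands. Second, the theorem is stated for passive \emph{operations}, not only unitaries; by Lemma~\ref{lemma:epc} these are mixtures $\mathcal E_x=\sum_j q_j^{(x)}\mathcal U_j^{(x)}$ of passive unitaries, and the reduction to the unitary case follows because refining the ensemble $\{p_x,\mathcal E_x(\ketbra{\psi}{\psi})\}$ to $\{p_x q_j^{(x)},\mathcal U_j^{(x)}(\ketbra{\psi}{\psi})\}$ keeps $\bar\rho_{SR}$ fixed while zeroing the conditional-entropy term, hence can only increase $\chi$. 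With these two remarks your argument is complete.
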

Although details of the proof are provided in the Appendix~\ref{app:passive}, we provide a brief sketch of the proof here. From Lemma. \ref{lemma:epc}, we notice that the encoding quantum operations are restricted to mixtures of energy-preserving unitaries whose support is constrained to $\{ \mathbb I, \sigma_z\}.$ Encoding with such operations restricts the encoded ensemble of states in a two-dimensional subspace. This observation, along with the energy constraint on the transmitted (marginal) states from the sender, limits any possible option of quantum advantage with the assistance of pre-shared entanglement.

Now we address the issue of optimality of Protocol.~\ref{protocol:ecdc} in the noiseless case for $d=2$.
\begin{proposition}
    The energy-constrained dense coding Protocol.~\ref{protocol:ecdc} is optimal for $d = 2$.
    \label{prop:optimaldcd=2}
\end{proposition}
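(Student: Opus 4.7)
The plan is to derive an upper bound on $C^{\tt DC}_{\{E\}_{\text{A}}}(\mathbb I)$ that meets the achievable $\mathcal{C}(E)$ of Theorem~\ref{th:ecdcc}. I would first reduce to a canonical form. Any mixed shared state can be purified by extending the receiver's Hilbert space, which can only increase $\chi$ without altering the sender's energy constraint; writing the Schmidt form $\sqrt{1-\lambda}\ket{e_0}\ket{f_0}+\sqrt{\lambda}\ket{e_1}\ket{f_1}$ for the resulting pure state, any unitary that misaligns the sender's Schmidt basis from the energy eigenbasis can be absorbed into the encoding unitaries $\{U_i\}$ without changing either the encoded marginal states or the full bipartite ensemble, so we may take $\ket{\psi_{SR}}=\sqrt{1-\lambda}\ket{00}+\sqrt{\lambda}\ket{11}$. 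Then I would $Z$-twirl the encoding, replacing $\{p_i,U_i\}$ by $\{p_i/2,U_i\}\cup\{p_i/2,Z U_i\}$: since $[Z,H]=0$ the energies of the encoded marginals are unaltered, while $\bar\rho_{SR}$ is dephased on the sender in the energy eigenbasis, which can only increase $\chi$. Hence WLOG $\bar\rho_{SR}=(1-\bar E)\ketbra{0}{0}_S\otimes\sigma^{(0)}_R+\bar E\,\ketbra{1}{1}_S\otimes\sigma^{(1)}_R$ with $\bar E\leq E$.

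Next I would parametrize $U_i\ket{0}=a_i\ket{0}+b_i\ket{1}$, $U_i\ket{1}=c_i\ket{0}+d_i\ket{1}$, set $\mu_i:=|b_i|^2$, $\bar\mu:=\sum_i p_i\mu_i$, and use unitarity $a_ic_i^*+b_id_i^*=0$. A direct calculation yields $\bar E=\lambda+\bar\mu(1-2\lambda)$, while the conditional states $\sigma^{(k)}_R$ have computational-basis diagonals fixed entirely by $(\lambda,\bar\mu)$ and off-diagonals proportional to $x:=\sum_i p_i\,a_ic_i^*$. For fixed diagonals, the entropy of a $2\times 2$ state is strictly decreased by any nonzero off-diagonal, so $S(\sigma^{(k)}_R)$ is maximized at $x=0$, and the resulting maximum collapses via the algebraic identity
\[
H(\bar E)+(1-\bar E)H\!\left(\tfrac{\lambda\bar\mu}{1-\bar E}\right)+\bar E\,H\!\left(\tfrac{\lambda(1-\bar\mu)}{\bar E}\right)=H(\lambda)+H(\bar\mu),
\]
giving $\chi\leq H(\lambda)+H(\bar\mu)$ for every admissible encoding. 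The Pauli choice in Protocol~\ref{protocol:ecdc} saturates this bound since each of $\{I,Z,X,XZ\}$ has $ac^*=0$ individually.

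Finally, imposing $\bar E\leq E$ gives $\bar\mu\leq(E-\lambda)/(1-2\lambda)\leq 1/2$ for $E<\mathbb{E}_d=1/2$, so monotonicity of $H$ on $[0,1/2]$ pushes the optimum to equality; identifying $E'=\lambda$ and $q_1=\bar\mu$ produces exactly the variational form of $\mathcal{C}(E)$ in Theorem~\ref{th:ecdcc}, which the protocol already attains. The main obstacle is the tightness step: one must argue that the residual off-diagonal freedom $x$ can only hurt $\chi$. The natural ``$X$-twirl'' that would flip the sign of $a_ic_i^*$ is forbidden here, since $X$ is active and mixing $U_i$ with $XU_i$ violates the energy bound; so the reduction to $x=0$ must proceed through the direct entropy bound on the $2\times 2$ conditional blocks $\sigma^{(k)}_R$ combined with the algebraic collapse above, which is the non-trivial piece connecting the entanglement term $H(\lambda)$ with the activity term $H(\bar\mu)$.
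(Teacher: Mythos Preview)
Your argument is correct and substantially stronger than the paper's own treatment. The paper does not give an analytical proof of Proposition~\ref{prop:optimaldcd=2}: Appendix~\ref{app:proofofoptimaldcd=2} establishes optimality in $d=2$ by \emph{numerical} optimization (ISRES) over the full parameter space $\{p_i,U_i,E'\}$ and reports agreement with $\mathcal{C}(E)$ to $\mathcal O(10^{-5})$. Your route is genuinely different: you produce an analytical upper bound that coincides with the achievable rate of Theorem~\ref{th:ecdcc}. The key steps---purification plus Schmidt reduction to $\ket{\psi_{SR}}=\sqrt{1-\lambda}\ket{00}+\sqrt{\lambda}\ket{11}$, the $Z$-twirl (licit because $[Z,H]=0$ so individual encoded energies are preserved while $S(\bar\rho_{SR})$ can only grow), and the observation that at $x=0$ the four eigenvalues of $\bar\rho_{SR}$ are exactly $\{(1-\lambda)(1-\bar\mu),\lambda\bar\mu,(1-\lambda)\bar\mu,\lambda(1-\bar\mu)\}$, i.e.\ the joint distribution of two independent Bernoullis---are all sound, and your algebraic collapse is just the chain rule $H(X\oplus Y)+H(X,Y\mid X\oplus Y)=H(X,Y)=H(\lambda)+H(\bar\mu)$. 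What your approach buys is a bona fide proof rather than numerical evidence, and a transparent explanation of \emph{why} the Pauli protocol is optimal: each Pauli has $a_ic_i^*=0$ individually, so no capacity is lost to the off-diagonal penalty on the $\sigma^{(k)}_R$ blocks.

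Two small points to tighten. First, you should explicitly dispose of the $\lambda>\tfrac12$ branch (it maps to $\lambda'=1-\lambda<\tfrac12$ with $\bar\mu\mapsto 1-\bar\mu$ and the same bound) and note that $\lambda=\tfrac12$ is infeasible for $E<\tfrac12$ since then $\bar E=\tfrac12$ irrespective of $\bar\mu$. Second, after identifying $E'=\lambda$, $q_1=\bar\mu$, verify that the constraint $\lambda+\bar\mu(1-2\lambda)\le E$ is literally $q_0E'+(1-q_0)(1-E')\le E$, so your upper bound is \emph{exactly} the variational problem in Eq.~\eqref{eq:complementary} for $d=2$, not merely bounded by it. With these addenda your argument is a complete analytical replacement for the paper's numerical verification.
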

\noindent The details of the proof are provided in Appendix~\ref{app:proofofoptimaldcd=2}. We also calculate the exact expression of DC capacity as 
\begin{eqnarray}
    C_{\{E\}_{\emph{A}}}^{{\tt DC}}=2H\left( \frac12 \big(1-\sqrt{1-2E} \big)\right)~~\text{for }E<\frac{1}{2},
\end{eqnarray}
achieved by optimal value of $q_0=1-E'$ where $E'=(1-\sqrt{1-2E})/2$ (see Appendix~\ref{app:proofofoptimaldcd=2}). Physically, the optimal shared state for an average energy bound $E<1/2$ is the pure state of the form given in Protocol.~\ref{protocol:ecdc} with entanglement (as measured by concurrence \cite{Wootters1998}) ${\tt Con}(\ket{\psi_{SR}^{E'}}) = \sqrt{2E}$.   Notice that the quantum advantage, expressed as \( C^{\tt DC}_{\{E\}_{\emph{A}}}/C_{\{E\}} \) 
grows monotonically with $E \in (0,\frac12)$ and attains its maximal value $2$ at $E = \frac12$, and remains so for all $E \geq \frac12$.
%is twofold for \( E \geq \frac{1}{2} \), while for \( E < \frac{1}{2} \) it remains less than $2$ and decreases monotonically with \( E \).
Similar analysis can also be carried out for the cases of equiprobable messages and strict energy constraints in noiseless and dephasing scenarios (see Appendices.~\ref{app:onc} and \ref{app:cdc}).

\subsection{Enhancement of energy-constrained classical capacity of CQ-channels with entanglement assistance}
\label{subsec:shirokov}
In the unconstrained scenario, it was shown \cite{Shirokov2012}  that entanglement-assistance fails to provide any advantage to the classical capacity of CQ channels \cite{Ruskai2003, Horodecki2003}. A CQ channel $\Lambda^{\text{CQ}}(\rho) = \sum_k \zeta_k \bra{\psi_k} \rho \ket{\psi_k},$ is an entanglement-breaking channel where $\{\ketbra{\psi_k}{\psi_k} \}$ constitutes a set of projectors, and $\zeta_k$s are density matrices.
Now, an important question arises: does the equivalency of classical capacity of the CQ channel between the entanglement-assisted case and the unassisted ones hold in the energy-constrained setting?  We answer the question negatively. 
\begin{figure}[h]
        \centering
        \includegraphics[width=\linewidth]{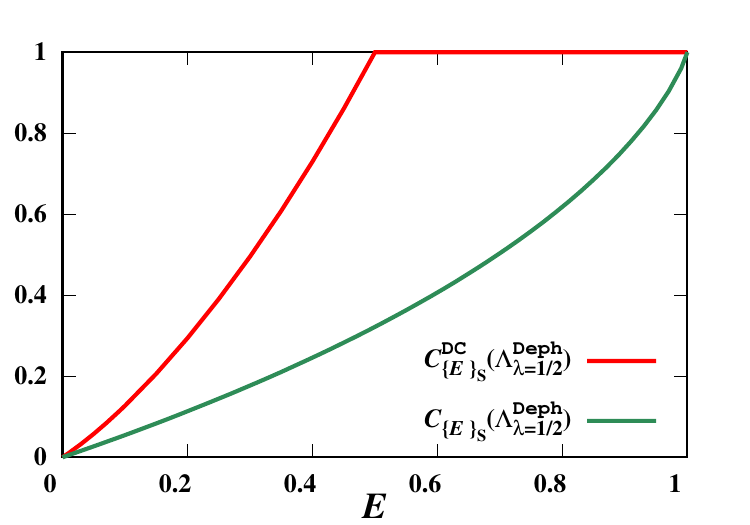}
        \caption{Strict energy-constrained DC capacity $C^{\tt DC}_{\{E\}_S}(\Lambda_{\lambda=\frac{1}{2}})$ (red) and classical capacity $C_{\{E\}_S}(\Lambda_{\lambda=\frac{1}{2}})$ (green) in the case of complete dephasing qubit channel. In contrast to the unconstrained case \cite{Shirokov2012}, entanglement provides an advantage in the classical capacity of a CQ channel in the strict energy-constrained scenario. Both axes are dimensionless.}
        \label{fig:shirokov1}
    \end{figure}
In particular, for  $\Lambda^{\tt Deph}_{\lambda = \frac{1}{2}}$, which is a CQ  channel having effect on arbitrary state $\rho$ as $\Lambda^{\tt Deph}_{\lambda = \frac{1}{2}}(\rho) = \ketbra{0}{0} \bra{0}\rho\ket{0} + \ketbra{1}{1} \bra{1}\rho\ket{1}$, we demonstrate that the entanglement-assisted DC capacity satisfies $C^{\tt DC}_{\{E\}_{\text{S}}}(\Lambda^{\tt Deph}_{\lambda = \frac{1}{2}}) > C_{\{E\}_{\text{S}}}(\Lambda^{\tt Deph}_{\lambda = \frac{1}{2}})$ for all $E \in (0,1)$ (see Fig.~\ref{fig:shirokov1}). This guarantees a finite advantage offered by entanglement assistance over the unassisted case with energy constraints. See Appendix~\ref{app:cdc} for more details, including the discussions of the feature on the average (strict) energy constraint with equiprobable messages.
%{\color{blue}comment SR: avg energy constraint with optimized probability r jnne Shirokov holds seta only mention in the appendix. main text e counter-example tuku e thak.}
%Interestingly, in Appendix. xxx we observe that Shirokov's result holds in the case of average energy constraint and we have $C^{\tt DC}_{\{E\}_{\text{A}}}(\Lambda^{\tt Deph}_{\lambda = \frac12}) = C_{\{E\}_{\text{A}}}(\Lambda^{\tt Deph}_{\lambda = \frac12})$
%Since $C^{\tt DC}_{\{{\tt P}_k\}_{\text{A}}\{{\tt D}_k\}_{\text{S}}}(\Lambda) \leq C^{\tt EA}_{\{{\tt P}_k\}_{\text{A}}\{{\tt D}_k\}_{\text{S}}}(\Lambda)$ we are assured....\\
The above result again strengthens our decision to choose $\Lambda^{\tt Deph}_{\lambda}$ as the prototypical example of noise in our analysis. 

% {\color{red} discuss the subtle point about holevo capacity, classical capacity and additivity.}

% \sout{It provided us with a neat platform to establish the results both in the entanglement-assisted and -unassisted scenarios: highlighting the need to consider the independent source assumption (equiprobable messages), and aiding us to furnish a concrete counterexample to Shirokov's result \cite{Shirokov2012}.}

\section{Conclusion}
\label{sec:conclusion}

Owing to the information transmission with limited energy transfer or purity, this work takes a step toward resolving the incongruity between the rigorous mathematical formalism of capacity and the physically relevant communication models. Our investigation unveils the emergence of characteristic energy scales below which the unconstrained framework of classical capacity does not work. Apart from obtaining compact forms of energy-constrained classical capacities for noiseless quantum channels, we found the classical capacities of dephasing quantum channels in which the probabilities in the optimal ensemble are energy-dependent. 

%In the future, it would be intriguing to use this formalism for additional encoding procedures (in the case of entanglement assistance) and other channels \cite{Chiribella2017}.}
%It will be interesting to generalize  this formalism for other  channels \cite{Chiribella2017} along with more general  encoding operations (in the entanglement-assisted scenario). 

%Apart from noiseless quantum channels and the dephasing channel, which we have studied, generalizing the capacities of arbitrary-dimensional energy-preserving channels \cite{Chiribella2017} along with the CPTP  encoding scheme (in the entanglement-assisted scenario) would be a future aim. 

%Going beyond qubits, 

We developed the energy-constrained entanglement-assisted protocol, which was also found to be optimal for two-qubit systems. Surprisingly, we demonstrated that under energy constraints, classical-quantum channels display a higher classical capacity when entanglement is apriori shared between the sender and receiver -- a stark contrast to the unconstrained scenario where no such gain is found. This shows how physical limits can profoundly alter the terrain of quantum advantage in communication. In the future, it would be interesting to expand this formalism to other channels, such as energy-preserving ones and general encoding processes (in the case of entanglement assistance).

% We proposed the energy-constrained entanglement-assisted protocol which is also shown to be optimal for two-qubit systems. 
%in higher dimensions is left as an open problem. 
% Remarkably, we showed that under energy constraints, classical-quantum channels exhibit enhanced classical capacity when entanglement is apriori shared between the sender and the receiver — a striking contrast to the unconstrained case emerges, where such enhancement is not observed. This demonstrates that physical constraints can fundamentally reshape the landscape of quantum advantage in communication.
%Finally, the single-shot channel capacity under the energy-constrained paradigm is the next task we plan to study.
% In the future, it would be intriguing to extend this formalism for other channels like energy-preserving ones \cite{Chiribella2017} and  general encoding procedures (in the case of entanglement assistance) .

%Our results emphasize the importance of considering realistic limitations when evaluating channel capacities, providing both theoretical and practical insights.

\section*{Acknowledgements}
We acknowledge the use of \href{https://github.com/titaschanda/QIClib}{QIClib} -- a modern C++ library for general purpose quantum information processing and quantum computing (\url{https://titaschanda.github.io/QIClib}) and cluster computing facility at Harish-Chandra Research Institute. PH acknowledges ``INFOSYS
scholarship for senior students".

\bibliography{bib}
   
\newpage
\newpage
\appendix

\onecolumngrid
\section{Proof of Theorem. \ref{theorem:eccp1}}
\label{app:1}
\noindent The proof proceeds in two steps -- we derive an upper bound of the energy-constrained classical capacity and then show that the upper bound can be achieved via encoding.
\begin{lemma}
    The classical capacity under average energy constraint $E$ is bounded by
    \begin{eqnarray}
        C_{\{E\}} :  \Bigg\{
        \begin{array}{cc}
          = \log_2 d   & E \geq \mathbb E_d \\ 
           \leq S(\tau_E)  & E < \mathbb E_d,
        \end{array}
        \label{eq:11_a}
    \end{eqnarray}
    where $\mathbb E_d = \frac{d-1}{2}$, and $\tau_E$ is the $d$-dimensional thermal state. Here $S(\rho)$ denotes the von-Neumann entropy of $\rho$.
    \label{th:1}
\end{lemma}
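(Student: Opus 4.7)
The plan is to establish the upper bound in two steps: first derive the standard Holevo-quantity bound in terms of the entropy of the average ensemble state, and then invoke the Gibbs maximum-entropy principle to bound this by the entropy of the thermal state at the prescribed energy.

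To begin, I would use the definition of the Holevo quantity and drop the (non-negative) conditional entropy term:
\begin{equation}
\chi(\{p_i,\rho_i\}) \;=\; S(\bar\rho) - \sum_i p_i S(\rho_i) \;\leq\; S(\bar\rho),
\end{equation}
where $\bar\rho = \sum_i p_i \rho_i$. Since the Hamiltonian $H$ is a linear observable, the energy of $\bar\rho$ is exactly the average ensemble energy:
\begin{equation}
\mathrm{Tr}(\bar\rho H) \;=\; \sum_i p_i \mathrm{Tr}(\rho_i H) \;\leq\; E,
\end{equation}
which holds under the average constraint (and a fortiori under the strict one). Thus the supremum of $\chi$ over admissible ensembles is bounded by the supremum of $S(\bar\rho)$ over all states $\bar\rho$ with $\mathrm{Tr}(\bar\rho H) \leq E$.

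The next step is the Gibbs variational argument. I would fix an auxiliary energy $E' \in [0, \mathbb{E}_d]$ and show that among states with $\mathrm{Tr}(\rho H) = E'$, the unique entropy maximizer is the thermal state $\tau_{E'}$. The standard way is via the non-negativity of the relative entropy: from $D(\rho \| \tau_{E'}) \geq 0$ one obtains
\begin{equation}
S(\rho) \;\leq\; -\mathrm{Tr}(\rho \log_2 \tau_{E'}) \;=\; (\log_2 e)\,\beta_{E'}\, \mathrm{Tr}(\rho H) + \log_2 Z(\beta_{E'}),
\end{equation}
and the right-hand side equals $S(\tau_{E'})$ when $\mathrm{Tr}(\rho H) = E'$. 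Then, over the full admissible set $\{\rho : \mathrm{Tr}(\rho H) \leq E\}$, one uses monotonicity of $E' \mapsto S(\tau_{E'})$ on $[0, \mathbb{E}_d]$ (easily checked by observing that $dS/dE' = (\log_2 e)\,\beta_{E'} \geq 0$ in this range) to conclude that the maximum is attained at $E' = \min(E, \mathbb{E}_d)$.

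Finally, the two cases split naturally: for $E \geq \mathbb{E}_d = \frac{d-1}{2}$, the energy constraint no longer binds (the maximally mixed state satisfies it and has entropy $\log_2 d$), so $S(\bar\rho) \leq \log_2 d$, which proves the first line of \eqref{eq:11_a}; for $E < \mathbb{E}_d$, the maximizer is $\tau_E$, giving $S(\bar\rho) \leq S(\tau_E)$, which is the second line. The main subtlety, as I see it, is simply invoking Gibbs' principle cleanly in the inequality-constrained form and justifying the monotonicity of $S(\tau_{E'})$ up to the threshold $\mathbb{E}_d$; once these are in place, the bound is immediate. The matching achievability half of Theorem~\ref{theorem:eccp1} will then require exhibiting an ensemble saturating this bound, which I expect to be handled separately (e.g., an ensemble of energy eigenstates with weights $p_n^E$).
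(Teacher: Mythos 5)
Your proposal is correct and follows essentially the same route as the paper: bound $\chi$ by $S(\bar\rho)$, then maximize the entropy subject to $\mathrm{Tr}(\bar\rho H)\leq E$ via the Gibbs state; your relative-entropy justification and the monotonicity check for the inequality-constrained form merely make explicit what the paper invokes as a standard fact. The only caveat is that the first line of the lemma is an equality, so the paper's proof already exhibits the achieving ensemble inside the lemma (the equiprobable phase-rotated superpositions $\frac{1}{\sqrt d}\sum_n \omega^{kn}\ket{n}$, each of energy exactly $\mathbb E_d$), whereas you defer achievability to the theorem.
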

\begin{proof}
   The central quantity of interest is the Holevo quantity $\chi(\{p_i,\rho_i\})$,
    \begin{eqnarray}
        \chi(\{p_i,\rho_i\}) &=& S(\bar{\rho}) - \sum_{i} p_i S(\rho_i) \leq S(\bar{\rho}). 
    \end{eqnarray}
    where \(\bar\rho=\sum p_{i}\rho_{i}\). Note that since \(S(\rho_i)\geq0\), \(\chi(\{p_i,\rho_i\})\leq S(\bar\rho)\) which is trivially attained for pure state ensembles.
    For the energy constrained classical capacity in Eq. (\ref{eq:cc1}),
    The maximal value of the upper bound is achieved when $S(\bar{\rho})$ is maximized with respect to the energy bound Tr$(\bar{\rho}H) \leq E$. Two distinct cases arise during this maximization. 

For any value of the energy constraint $E \geq \mathbb E_d$, we note that $S(\bar{\rho})$ is maximized when $\bar{\rho}$ is the maximally mixed state $\bar{\rho} = \mathbb{I}/d$. The energy of the maximally mixed state is $ \frac{1}{d} \text{Tr}(H) = \mathbb E_d.$ The capacity of $\log_2 d$ can be achieved by encoding the $k^{\text{th}}$ classical message in the state 
\begin{eqnarray}
  \ket{\psi_k} = \frac{1}{\sqrt{d}} \sum_{n=0}^{d-1} \omega^{kn} \ket{n},  
  \label{eq:psi_k}
\end{eqnarray}
 where $\{\ket{\psi_k} \}_{k=0}^{d-1}$ forms the quantum Fourier transform (QFT) of the Hamiltonian eigenbasis $\{ \ket n \}$, and $\omega = e^{\frac{2\pi i}{d}}$ is the $d^{\text{th}}$ root of unity with $i=\sqrt{-1}$.
% \ph{$k$ runs from $0$ to $d-1$ right?}

When $E < \mathbb E_d$, $S(\bar{\rho})$ is maximized when $\bar{\rho}$ is a thermal (Gibbs) state with energy $E$. This directly follows from the fact that the state which maximizes entropy for a given energy is the thermal state $\tau_E = e^{-\beta_E H}/Z_E$, where $Z_E = $Tr$(e^{-\beta_E H})$ satisfying $\text{Tr}(\tau_EH) = E.$
        Therefore, for $E < \mathbb E_d$, we have $C_{\{E\}} \leq S(\tau_E).$ This completes the second part of the proof.
    \end{proof}

Let us now illustrate that the upper bound of $C_{\{E<\mathbb E_d\}}$ in Eq. (\ref{eq:11_a}) can be achieved via suitable encoding. For $E < \mathbb E_d$, the $k^{\text{th}}$ message (appearing with probability $1/d$) is encoded in the state
       \begin{eqnarray}
           \ket{\psi_k} = \sum_{n=0}^{d-1} \omega^{kn} \sqrt{p_n^E}\ket{n}.
           \label{eq:psi_k_ec}
       \end{eqnarray}
       where 
    %$\omega = e^{\frac{2\pi i}{d}}$ is one of the $d^{\text{th}}$ root of unity, the kets $\ket{n}$s are the eigenstates of the Hamiltonian with increasing energies, and $i = \sqrt{-1}$. \ph{Almost Repeated sentence of the line after Eq.~\eqref{eq:psi_k}}. The 
    the probabilities $\{p_l^E\}_{l=0}^{d-1}$ are the weights of the thermal state, $\tau_E = \sum_{n=0}^{d-1}p_n^E \ketbra{n}{n}$. 
     Now, we have
     \begin{eqnarray}
      \bar\rho =    \frac1d \sum_{k=0}^{d-1} \ketbra{\psi_k}{\psi_k} = \tau_E.
     \end{eqnarray}
     All off-diagonal terms vanish identically, as each term is proportional to the sum $1 + \omega + \omega^2 + \ldots + \omega^{d-1}=0$. Noting that $S(\bar\rho) = S(\tau_E) = H(\{p_k^E\})$, it completes the proof of saturation.

The capacities remain the same when one considers the strict energy constraint as well, since, the energy of the encoding states, $\ket{\psi_k}$s in Eqs. ~\eqref{eq:psi_k} and \eqref{eq:psi_k_ec} corresponding to the entire range of the energy constraint $E$  are exactly equal to $\sum_{n=0}^{d-1} n p_n^E = E$. %\ph{Also include the argument for Eq.~\eqref{eq:psi_k}.} 
This completes the proof of Theorem.~\ref{theorem:eccp1}.

 \subsection{Special cases: $d = 2,3$}
 For \(d=2\)and \(3\), let us present explicitly jot down the forms of the encoding states.
In  $d =2$, for an energy (strict) constraint $E < \mathbb E_{d=2} = 1/2$, the following states are considered for encoding the binary message:
\begin{eqnarray}
    \ket{\psi_0} &=& \sqrt{1-E} \ket{0} + \sqrt{E} \ket{1}, \nonumber \nonumber \\ \textrm{and} \quad 
     \ket{{\psi}_1} & = & \sqrt{1-E} \ket{0} - \sqrt{E} \ket{1}.
     \label{eq:qubit_encoding_AE}
\end{eqnarray}
From Theorem. \ref{theorem:eccp1}, we have a compact expression for the energy-constrained classical capacity in the $d = 2$ case.
\begin{eqnarray}
          C_{\{E\}} =  \Bigg\{
        \begin{array}{cc}
           1   & E \geq \frac12 \\ 
           H(E)  & E < \frac{1}{2}.
        \end{array}
        \label{eq:capacityd=2}
    \end{eqnarray}

For $d=3$, the states employed to encode the three messages for $E < 1$ can be represented as  
\begin{eqnarray}
    \ket{\psi_0} &=& \sqrt{p_0^E}\ket0+\sqrt{p_1^E}\ket1+\sqrt{p_2^E}\ket2, \nonumber \\
    \ket{\psi_1} &=& \sqrt{p_0^E}\ket0+ \omega\sqrt{p_1^E}\ket1+\omega^2\sqrt{p_2^E}\ket2, \nonumber \\ \textrm{and} \quad 
    \ket{\psi_2} &=& \sqrt{p_0^E}\ket0+ \omega^2\sqrt{p_1^E}\ket1+\omega\sqrt{p_2^E}\ket2,
\end{eqnarray}
where $\omega = e^{\frac{2\pi i}{3}}$ is one of the cubic root of unity. The corresponding probabilities read as
\begin{eqnarray}
    p_0^E &=& \frac16(7-3E-\sqrt{1+6E-3E^2}), \nonumber\\
    p_1^E &=& 2-2p_0^E - E, \nonumber \\ \textrm{and}  \quad 
    p_2^E &= &1 - p_0^E - p_1^E.
\end{eqnarray}
From $d = 3$ onwards, we have a closed-form expression of the capacity and the encoding states, although the closed form of probabilities in terms of energy become cumbersome.

\begin{figure}[h]
\includegraphics [width=0.5\linewidth]{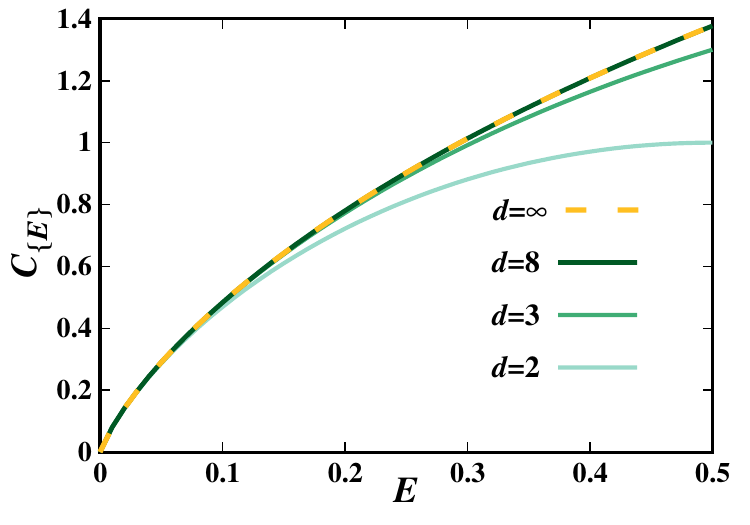}
\caption{The energy-constrained classical capacity, \(C_{\{E\}}\) (vertical axis), plotted against the average energy constraint, \(E\) (horizontal axis), for different dimensions \(d\). As the dimension increases, the value of \(C_{\{E\}}\) also increases. Dimensional advantage is significant when $E$ is high, although it saturates rapidly with increasing $d$. Both axes are dimensionless.} 
\label{fig:cc_avgE_d}
\end{figure}

\subsection{Dimensional advantage}
\label{app:dimadv}
This energy scale allows us to track the dimensional advantage of energy-constrained capacity for $E < \mathbb E_d$. If $E < \mathbb E_{d=2} = \frac12$, we compute $C_{\{E\}}$ for various dimensions $d$, using techniques of statistical mechanics. To begin with we evaluate the partition function $Z_E = \sum_{n=0}^{d-1} e^{-n \beta_E} = \frac{1-e^{-d\beta}}{1-e^{-\beta}}$.  Now, recall that $ E = -\frac{\partial}{\partial \beta} \ln Z_E = \frac{1}{e^\beta-1} - \frac{d}{e^{d\beta}-1}$ \cite{reif2010statistical}. 
Formally, the solution of the above equation $\beta^*_d$ for arbitrary $d$ can be expressed as $  \beta^*_d = \beta(d,E).$
For $d = \infty$, we have $\beta^*_{d = \infty} = \ln (1 + E^{-1})$. For finite $d$, we find out $\beta^*_d$ via standard root finding algorithms \cite{press2007numerical}.
Substituting this value of $\beta^*_d$ in the expression of the entropy in terms of average energy and the partition function, we get $  C_{\{E\}} = (\beta^*_d E + \ln Z_E)\log_2 e.$
Note that for $d = \infty$,  $C_{\{E\}} = (1+E)\log_2 (1+E) - E \log_2 E$. We plot $C_{\{E\}}$ vs. $E$ for various $d$ in Fig.~\ref{fig:cc_avgE_d}, to explicitly demonstrate the dimensional advantage of $C_{\{E\}}$ even in the presence of energy constraint.

\section{Proof of Lemma. \ref{lemma:epc}}
\label{app:lemma1}
The proof proceeds by noting a result from Ref. \cite{Chiribella2017} which proves that all energy-preserving channels are unital. Also, note that all unital qubit channels can be expressed as mixtures of unitaries \cite{Kmmerer1987}. Therefore, the most general qubit energy-preserving channel is
\begin{eqnarray}
    \Lambda^{\tt EP}(\rho) = \sum_i p_i \mathcal{U}_i(\rho)
\end{eqnarray}
where $\mathcal{U}_k(\sigma) \equiv U_k \sigma U_k^\dagger$. The property of energy preservability induces a stronger requirement on $U_k$s. Now, the channel $\Lambda^{\tt EP} \equiv \{p_i, U_i\}$ is energy preserving iff
\begin{eqnarray}
    \text{Tr} (\Lambda^{\tt EP}(\rho)H) &=& \text{Tr} (\rho H) ~~\forall \rho \nonumber \\
    \implies \sum_i p_i \text{Tr}(\mathcal{U}_i(\rho)H) &=& \text{Tr} (\rho H) ~~\forall \rho \nonumber \\ 
    \implies \sum_i p_i \text{Tr}(\rho ~\mathcal{U}_i^\dagger(H)) &=& \text{Tr} (\rho H) ~~\forall \rho \nonumber \\
    \implies  \text{Tr}(\rho ~\sum_i p_i\mathcal{U}_i^\dagger(H)) &=& \text{Tr} (\rho H) ~~\forall \rho \nonumber \\
    \implies \sum_i p_i \mathcal{U}_i^\dagger(H) &=&  H  \nonumber \\
    \implies \sum_i p_i \ketbra{\psi_i}{\psi_i} &=& \ketbra{1}{1},
    \label{eq:ifff}
\end{eqnarray}
where in $d=2$, we have $H = \ketbra{1}{1}$, and $\ket{\psi_i}=U_i^\dagger\ket 1$. Then the iff condition in Eq. \eqref{eq:ifff} translates to
\begin{eqnarray}
    \sum_i p_i \ketbra{\psi_i}{\psi_i} = \ketbra{1}{1}.
    \label{eq:iff2}
\end{eqnarray}
Since the R.H.S is a pure state (extreme point of set of states), Eq. \eqref{eq:iff2} holds iff $\ket{\psi_i} \propto \ket 1$ for all $i$. It right away implies, for all $i$, 
\begin{eqnarray}
    U_i^\dagger \ketbra{1}{1} U_i = \ketbra{1}{1} \implies \ketbra{1}{1} U_i = U_i \ketbra{1}{1} \implies U_i H -  H U_i = 0 \implies [U_i,H] = 0.
\end{eqnarray}
This completes the proof.

\section{Capacity of the dephasing channel}
\label{app:3}
Here, we derive the optimality of the encoding states and the capacity of the dephasing channel ($\Lambda^{\tt{Deph}}_\lambda$) in the energy-constrained scenario, mentioned in the main text.

\subsection{Average energy constraint}
To gain insight into the structure of equiprobable optimal signal states under arbitrary dephasing noise, subject to the constraint that the average ensemble energy satisfies \( E <  \frac{1}{2} \), we begin by examining the limiting case of a completely dephasing channel, namely when \( \lambda = \frac{1}{2} \). In this scenario, since the Bloch sphere is reduced to the $z$-axis only, the optimal encoding is performed by $\Lambda^{\tt{Deph}}_{\lambda=\frac12}\left({\tt Proj}(\ket0)\right)$ and $\Lambda^{\tt{Deph}}_{\lambda=\frac12}\left({\tt Proj}(\sqrt{1-2E}\ket0+\sqrt{2E}\ket1)\right)$. This, in turn, gives $\tilde C_{\{E\}_A}(\Lambda^{\tt{Deph}}_{\lambda=\frac12})=H(E)-H(2E)/2$. Having discussed the optimal signal states in both limiting cases of $\lambda=1/2$ and $\lambda=1$ (i.e., the noiseless scenario, described in Theorem.~1 and Eq.~\eqref{eq:qubit_encoding_AE}), we can write down the Holevo capacity $\forall \lambda$ in the region $E<1/2$ as 
\begin{eqnarray}
    \widetilde C_{\{E\}_A}(\Lambda^{\tt Deph}_{\lambda}) = \max_{\Delta \in [0,E]}\chi\left(\{1/2,\Lambda_\lambda^{\tt Deph}(\rho_\pm(E,\Delta))\}\right),
    \label{eq:lambda_ec_cc}
\end{eqnarray}
where $\rho_{\pm}={\tt Proj}\left(\sqrt{1-E_\pm}\ket0\pm\sqrt{E_\pm}\ket1\right)$ are used for encoding with $E_\pm=E\pm\Delta$. Note that the optimization in Eq.~\eqref{eq:lambda_ec_cc} is over a single parameter $\Delta(E,\lambda)$ only. In limiting cases we have $\Delta(E,1/2)=E$ and  $\Delta(E,1)=0$ as argued previously. We calculate $\widetilde C_{\{E\}_A}(\Lambda^{\tt Deph}_{\lambda})$ for various noise strength $\lambda$ and energy bound $E$ as depicted in Fig.~\ref{fig:cc_avg_strict}(a).
\begin{figure*}
\includegraphics[width=\textwidth]{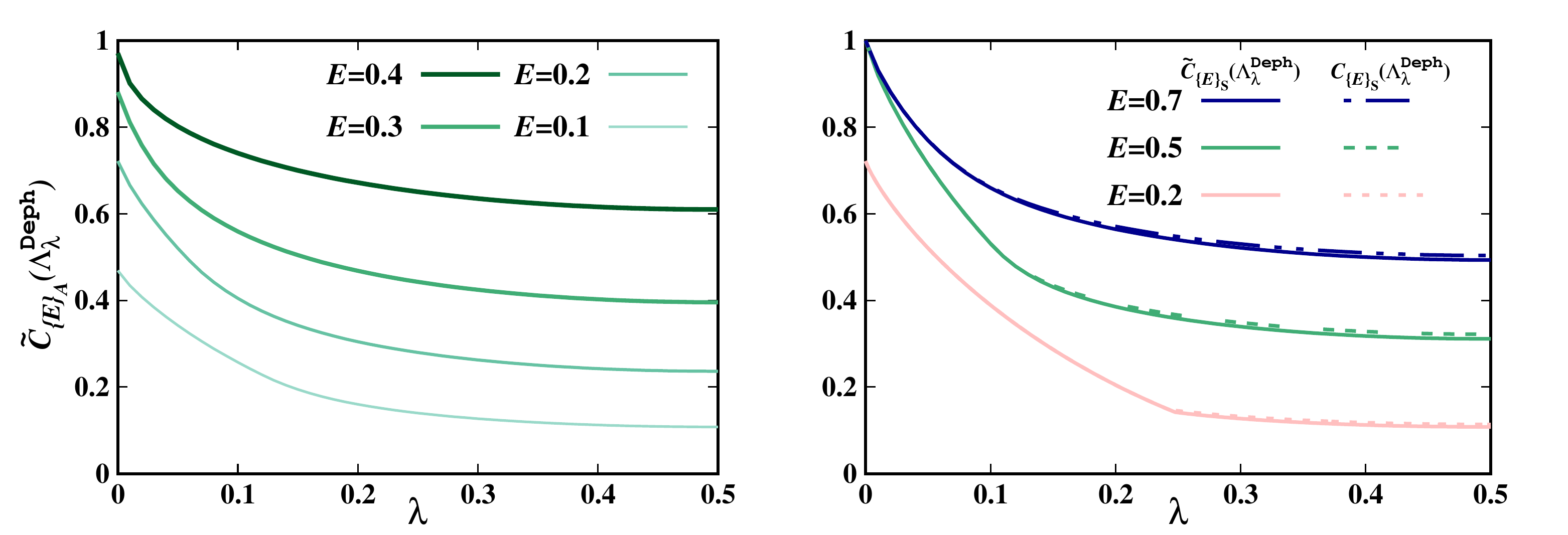}
\caption{(a) Average energy-constrained classical capacity, $\widetilde{C}_{\{E\}_A}(\Lambda^{\mathrm{Deph}}_{\lambda})$ (vertical axis) against the dephasing noise parameter, $\lambda$ (horizontal axis), for various values of the average energy constraint, $E$. (b) Strict energy-constrained classical capacity (vertical axis) versus the dephasing noise parameter, \(\lambda\) (horizontal axis), for different values of the energy bound \(E\). Here, \(\widetilde{C}_{\{E\}_{S}}\big(\Lambda^{\mathrm{Deph}}_{\lambda}\big)\) (solid-lines) represents the capacity under an equiprobable input ensemble while \(C_{\{E\}_{S}}\big(\Lambda^{\mathrm{Deph}}_{\lambda}\big)\) (dashed-lines) corresponds to the optimal probability scenario. It is observed that the difference \(C_{\{E\}_{S}}\big(\Lambda^{\mathrm{Deph}}_{\lambda}\big) - \widetilde{C}_{\{E\}_{S}}\big(\Lambda^{\mathrm{Deph}}_{\lambda}\big)\geq\) \(10\sigma\) within the interval \(0.4 \leq \lambda \leq 0.5\) for \(E = 0.5\) and \(E = 0.7\), where \(\sigma = 5 \times 10^{-5}\). As \(E\) increases, the curves shift from bottom to top. Both axes are dimensionless.
} 
\label{fig:cc_avg_strict}
\end{figure*}

Contrastingly, when $E\geq 1/2$, the center of the Bloch sphere is available and encoding can be done with $\ket0$ and $\ket1$ states to get $\widetilde C_{\{E\}_A}(\Lambda^{\tt Deph}_{\lambda})=1$.

\subsection{Strict energy constraint}
After determining the optimal signal ensemble under the average EC scenario, we observe that when a strict EC is imposed, the capacity with equiprobable messages can be expressed as 
\begin{eqnarray}
    \widetilde C_{\{E\}_S}(\Lambda^{\tt Deph}_{\lambda}) = \max_{\Delta \in [0,E]}\chi\left(\{1/2,\Lambda_\lambda^{\tt Deph}(\rho_0(E));1/2,\Lambda_\lambda^{\tt Deph}(\rho_-(E,\Delta))\}\right),
\end{eqnarray}
with $\rho_0={\tt Proj}(\sqrt{1-E}\ket0+\sqrt{E}\ket1)$ and $\rho_-={\tt Proj}(\sqrt{1-E_-}\ket0-\sqrt{E_-}\ket1)$. In the case of complete dephasing channel, the capacity simplifies to $\widetilde C_{\{E\}_S}(\Lambda^{\tt Deph}_{\lambda=\frac12})=H(E/2)-H(E)/2$, since  $\Delta(E,\frac12)=E$ in this regime. Note that, for the noiseless case, $\Delta(E,1)=0$. For other nontrivial values of $\lambda$, we resort to numerical optimization over $\Delta$.

Interestingly, when the whole ensemble is optimized, although the encoding states remain the same as $\rho_0$ and $\rho_-$, the probability $p_0$ becomes a function of energy bound, i.e., 
\begin{eqnarray}
    C_{\{E\}_S}(\Lambda^{\tt Deph}_{\lambda}) = \max_{p_0,\Delta \in [0,E]}\chi\left(\{p_0(E),\Lambda_\lambda^{\tt Deph}(\rho_0(E));(1-p_0(E)),\Lambda_\lambda^{\tt Deph}(\rho_-(E,\Delta))\}\right),
\end{eqnarray}
which requires two-parameter optimization. In Fig.~\ref{fig:cc_avg_strict}(b), we present a comparison between \( C_{\{E\}_S}(\Lambda^{\tt{Deph}}_{\lambda}) \) and \( \widetilde{C}_{\{E\}_S}(\Lambda^{\tt{Deph}}_{\lambda}) \) across various noise regimes under different values of fixed energy constraint \( E \). From this analysis, we conclude that \( C_{\{E\}_S}(\Lambda^{\tt{Deph}}_{\lambda}) \geq \widetilde{C}_{\{E\}_S}(\Lambda^{\tt{Deph}}_{\lambda}) \). Note that equality holds in two limiting cases: $(1)$ when \( E = 1 \) for any value of \( \lambda \), and $(2)$ when \( \lambda = 0 \) or \( \lambda = 1 \), for any value of the energy bound \( E \).

\section{Classical capacities under dual constraints of energy and purity}
\label{app:dual}

We now move on to a setting, in which apart from energy, another physical property, namely purity of the states used for encoding is bounded. From Eq.~\eqref{eq:cc1}, we have
$N_A = N_S =1$, where $\mathcal{P}_1(\sigma) = $ Tr$(\sigma H)$, and $\mathcal{D}_1(\sigma) =$ Tr$(\sigma^2)$, for any state $\sigma$. With the above identifications, we are set to compute the classical capacity of a quantum channel $\Lambda$ with dual constraints of average energy and strict purity $C_{\{E\}_{\text{A}}\{L\}_{\text{S}}}(\Lambda).$
\iffalse
\begin{equation}
\begin{aligned}
C_{\{E\}_{\text{A}}\{L\}_{\text{S}}}(\Lambda) =  & \max_{\{p_i,\rho_i\}} \chi(\{p_i,\Lambda(\rho_i)\})\\
\textrm{subject to}  \quad & \text{Tr}\Big(\big(\sum_ip_i\rho_i \big) H\Big)\leq E,  \\
    & {\text{Tr}}(\rho_i^2) \leq L ~~\forall \rho_i. 
\end{aligned}
\label{eq:ccdual}
\end{equation}
\fi
%In particular, we consider an average energy constraint $(\text{Tr}(\bar\rho H)\leq E)$, where $\bar\rho = \sum_i p_i \rho_i$ is the average encoding ensemble state, and a strict constraint on purity, that translates to Tr$(\rho_i^2) \leq L$.

The capacity of a noiseless qubit channel turns out to be 
%From the general definition of constrained capacity in Eq. \eqref{eq:cc1}, we consider the case where $N = 2, m = 1$ with ${\tt P}_1 = E$ and ${\tt P}_2 = L.$ Finally, we compute
\begin{eqnarray}
    C_{\{E\}_{\text{A}}\{L\}_{\text{S}}} = \Bigg\{ 
        \begin{array}{cc}
           1 - H(r_L^-)   & E \geq \frac12, \\ 
            H(E)-H(r_L^-)  & r_L^- \leq E < \frac12, \\ 
            0 & E < r_L^-,
        \end{array}
        \label{eq:avgE_purity}
\end{eqnarray}
where like in Eq. \eqref{eq:eccc3}, we use $C_{\{E\}_{\text{A}}\{L\}_{\text{S}}} := C_{\{E\}_{\text{A}}\{L\}_{\text{S}}}(\Lambda = \mathbb I)$, and $r_L^{\mp} = \frac{1 \mp r_L}{2}$, where $r_L = \sqrt{2L-1}$ is the Bloch radius up to which states can be used for encoding consistent with the strict purity constraint. 
From the definition of states, we have $L \geq \frac12$, guaranteeing $r_L \geq 0$.
The details of the calculation are furnished in Appendix \ref{subsec:noiseless}.  Interestingly, the purity constraint induces another energy scale in the problem on top of $E = \frac12$, namely $r_L^-$. When the energy bound $E < r_L^-$, the set of available states for encoding becomes null and hence the capacity vanishes.

 Like in the case of only average energy constraint, the noisy capacity also matches with the noiseless one: $C_{\{E\}_{\text{A}}\{L\}_{\text{S}}}(\Lambda^{\tt Deph}_\lambda) = C_{\{E\}_{\text{A}}\{L\}_{\text{S}}}$. Again, a crucial difference between noisy and noiseless case occurs from the optimal encoding states and probabilities, specifically the optimal value of the noisy capacity is achieved by $\{1-p ,\Lambda^{\tt Deph}_\lambda(\rho_+), p, \Lambda^{\tt Deph}_\lambda(\rho_-) \}$, where $\Lambda^{\tt Deph}_\lambda(\rho_\pm) = \rho_\pm = r_L^\mp \ketbra{0}{0}+ r_L^\pm \ketbra{1}{1}$, and
\begin{eqnarray}
    p = \Bigg\{ 
        \begin{array}{cc}
          \frac12  & E \geq \frac12 \\ 
           \frac12 \Big( 1 + \frac{1-2E}{r_L} \Big)  & r_L^- \leq E < \frac12. 
        \end{array}
\end{eqnarray}
Notably, the probability of encoding for maximizing the capacity depends on both the energy and the purity constraints. While in the noiseless case, the encoding probabilities can be chosen as equiprobable irrespective of the energy and purity constraints.

In case of independent sources requiring the classical messages to come with equal probabilities,
let us consider the impact of this setting on the capacity of the dephasing channel, e.g, when $\lambda = \frac12$. Our computations reveal

\begin{eqnarray} \widetilde{C}_{\{E\}_{\text{A}}\{L\}_{\text{S}}}\big(\Lambda^{\tt Deph}_{\lambda=\frac12}\big) &=&\Bigg\{
      \begin{array}{cc}
           1 - H(r_L^-)   &E\geq \frac12, \\ 
            H(E)-\frac12 \big(H(r_L^-) +  H(2E-r_L^-)\big)  &  r_L^- \leq E < \frac12, \\ 
            0 &  E < r_L^-.
        \end{array}
        \label{eq:noise_avgE_purity}
\end{eqnarray}
%{\color{blue} The above computations are for equiprobable classical messages. We have to to also compute the case where we maximize over $p$s.} \\
Details of the calculation are provided in Appendix \ref{subsec:noisy}.
%where we also present our results for general dephasing channels as well.

Here, we compute the optimal encoding states and corresponding capacities under energy and purity constraints for noiseless and noisy dephasing channels.
\subsection{Noiseless channel}
\label{subsec:noiseless}
In the noiseless scenario, to understand the structure of optimal signal states for encoding under energy and strict purity constraints, we consider the following setup. The average energy constraint requires that \( E \geq \frac{1}{2} \), while the strict purity constraint restricts the set of allowable states to lie on the surface of a Bloch sphere of radius \(r_L = \sqrt{2L - 1}\)
where \( L\) is the purity bound.

To simplify the analysis, we consider two extremal states aligned along the vertical (z-) axis of the Bloch sphere. These encoded states are defined as \(
\rho_{\pm} = r_L^{\mp} \ket{0}\bra{0} + r_L^{\pm} \ket{1}\bra{1},
\) where \( r_L^{\mp} = \frac{1 \mp r_L}{2}.\)
These states lie on the surface of the Bloch sphere with radius \( r_L \), ensuring that they satisfy the strict purity constraint. Notably, since \( \rho_+ \) and \( \rho_- \) are symmetric about the equator of the Bloch sphere, they possess equal von Neumann entropy, i.e, \( S(\rho_+) = S(\rho_-)\) and hence have the average state, \(\bar\rho=\frac{I}{2}\). Thus, the capacity takes the form as 
\begin{eqnarray}
C_{\{E\}_{\text{A}}\{L\}_{\text{S}}}=1-H(r_L^-),   
\end{eqnarray}
 where \(H(r_L^-)=S(\rho_+)=-r_L^- \log_2(r_L^-)-(1-r_L^-)\log_2(1-r_L^-)\).

Now, we consider the complementary regime where the average energy constraint satisfies \( E < \frac{1}{2} \), while the strict purity constraint remains unchanged. Within this regime, two distinct energy intervals emerge: \( r_L^- \leq E < \frac{1}{2} \) and \( E < r_L^- \). In the interval, \( r_L^- \leq E < \frac{1}{2} \), the set of permissible encoding states are those that simultaneously lie on the energy plane \( E \) and on the surface of the Bloch sphere with radius \( r_L \). Let the encoding ensemble be \( \{(1-p), \rho_1, p, \rho_2\} \), with corresponding average state, \( \bar{\rho} = (1-p)\rho_1 + p\rho_2 \). Since both \( \rho_1 \) and \( \rho_2 \) lie on the surface of the Bloch sphere of radius \( r_L \), they have the same entropy, i.e., \( S(\rho_1) = S(\rho_2) = S(\rho_+) \). The entropy of the average state \( \bar{\rho} \) is given by \( S(\bar{\rho}) = H(E) \). Thus, the capacity, in turn, reads
\begin{eqnarray}
C_{\{E\}_{\text{A}}\{L\}_{\text{S}}}=H(E)-H(r_L^-).
\label{eq:noiseless_cc_int}
\end{eqnarray}

In the other constrained energy regime, where \( E < r_L^- \), no valid encoding states are available. This is because the corresponding energy plane lies entirely outside the Bloch sphere of radius \( r_L \), making it impossible to find any quantum state that simultaneously satisfies both the energy and purity constraints. As a result, the communication capacity in this regime vanishes, i.e.,\(C_{\{E\}_{\text{A}}\{L\}_{\text{S}}} = 0.\)

\subsection{Impact of dephasing channel on capacities under dual constraints}
\label{subsec:noisy}

We investigate the impact of dephasing noise on communication capacity under dual constraints of average energy and the purity condition given in Eq.~(\ref{eq:noise_avgE_purity}).
 In particular, we focus on the special case of the complete dephasing channel with \(\lambda=\frac12\). In the interval \(E\geq\frac12\), the optimal encoding ensemble \(\{1-p,\Lambda_{\lambda}^{\text{Deph}}(\rho_+),p,\Lambda_{\lambda}^{\text{Deph}}(\rho_-)\}\). Interestingly, \(\Lambda_{\lambda}^{\text{Deph}}(\rho_\pm)=\rho_\pm\) and \(
\rho_{\pm} = r_L^{\mp} \ket{0}\bra{0} + r_L^{\pm} \ket{1}\bra{1}\). Here, the optimal encoding states are equiprobable, i.e, \(p=\frac12\). Thus, the capacity takes the same form as noiseless scenario, i.e, \(\widetilde{C}_{\{E\}_{\text{A}}\{L\}_{\text{S}}}\big(\Lambda^{\tt Deph}_{\lambda=\frac12}\big)=C_{\{E\}_{\text{A}}\{L\}_{\text{S}}}=1-H(r_L^-)\).

In the average energy-constrained regime defined by the interval \(\, r_L^- \leq E < \frac{1}{2} \,\), we focus on the equiprobable encoding scenario. The optimal ensemble in this case is given by \(\left\{ \frac{1}{2}, \rho_1, \frac{1}{2}, \rho_2 \right\}\). The first state, \(\rho_1 = \rho_- = r_L^+ |0\rangle\langle 0| + r_L^- |1\rangle\langle 1|\), is chosen to have energy \(E - \Delta\), which implies \(r_L^- = E - \Delta\) and \(r_L^+ = 1 - (E - \Delta)\). To ensure that the ensemble satisfies the average energy constraint, the second state \(\rho_2\) must carry energy \(E + \Delta\). Accordingly, we take \(\rho_2 = \big(1 - (E + \Delta)\big) |0\rangle\langle 0| + (E + \Delta) |1\rangle\langle 1|\). Thus, the encoding states can take the form as
\begin{eqnarray}
    \rho_1&=& (1-r_L^-) |0\rangle\langle 0| + r_L^- |1\rangle\langle 1|,\\ \nonumber
    \rho_2&=& (1-2E+r_L^-) |0\rangle\langle 0| + (2E-r_L^-) |1\rangle\langle 1|,
\end{eqnarray}
where \(\Delta=E-r_L^-\) and the average state, \(\bar\rho=(1-E)|0\rangle \langle0| + E|1\rangle \langle 1|\). Thus, \(S(\bar\rho)=H(E)\), \(S(\rho_1)=H(r_L^-)\) and \(S(\rho_2)=H(2E-r_L^-)\) where \(H(2E-r_L^-)=-(2E-r_L^-)\log_2(2E-r_L^-)-\big(1-(2E-r_L^-)\big)\log_2\big(1-(2E-r_L^-)\big)\). Therefore, the capacity becomes 
\begin{eqnarray}
    \widetilde{C}_{\{E\}_{\text{A}}\{L\}_{\text{S}}}\big(\Lambda^{\tt Deph}_{\lambda=\frac12}\big)&=&H(E)-\frac12\big(H(r_L^-)+H(2E-r_L^-)\big).
    \label{eq:capacity_int}
\end{eqnarray}
Now we will focus on the optimal probability encoding scenario. In the interval \(r_L^-\leq E<\frac12\), the optimal encoding ensemble \(\{1-p,\Lambda_{\lambda}^{\text{Deph}}(\rho_+),p,\Lambda_{\lambda}^{\text{Deph}}(\rho_-)\}\) where \(\Lambda_{\lambda}^{\text{Deph}}(\rho_\pm)=\rho_\pm\) and the average state takes the form as 
\begin{eqnarray}
  \bar\rho&=&p\rho_-+(1-p)\rho_+\nonumber \\ &=&\big(p(1-r_L^-)+(1-p)r_L^-\big)|0\rangle\langle 0|+\big(pr_L^-+(1-p)(1-r_L^-)\big)|1\rangle\langle1|.
  \label{eq:avg_state_dc}
\end{eqnarray}
 Again, we know that \(\bar\rho\) must lie on the energy plane \(E\), i.e,  
 \begin{eqnarray}
   \bar\rho&=&(1-E)|0\rangle\langle 0|+E|1\rangle\langle 1|.
   \label{eq:avg_state_en}
 \end{eqnarray}
 Thus, comparing Eqs. (\ref{eq:avg_state_dc}) and (\ref{eq:avg_state_en}), we can obtain 
 \begin{eqnarray}
 p(1-r_L^-)+(1-p)r_L^-&=&(1-E)\nonumber\\\nonumber
 \implies p &=& \frac{1-E-r_L^-}{1-2r_L^-},\\
 \implies p&=&\frac12\bigg(1+\frac{1-2E}{r_L}\bigg).
     \label{eq:com}
 \end{eqnarray}
 Therefore, the optimal encoding probability, \(p\) depends explicitly on the average energy constraint \(E\) and the Bloch sphere radius \(r_L\), which arises from the imposed purity constraint.

Although, \(S(\bar\rho)=H(E)\), and \(S(\rho_+)=S(\rho_-)=H(r_L^-)\). Therefore, the capacity, \({C}_{\{E\}_{\text{A}}\{L\}_{\text{S}}}\big(\Lambda^{\tt Deph}_{\lambda=\frac12}\big)=H(E)-H(r_L^-)\) which is same as shown in Eq. (\ref{eq:noiseless_cc_int}).

In the other constrained energy regime, where \( E < r_L^- \), no valid encoding states are available. Thus, \(\widetilde{C}_{\{E\}_{\text{A}}\{L\}_{\text{S}}}\big(\Lambda^{\tt Deph}_{\lambda=\frac12}\big)=0\).

\section{Proof of Theorem. \ref{th:ecdcc}}
\label{app:proofDCd}
    Following Protocol \ref{protocol:ecdc}, the classical messages $\{k,m\}_{k,m=0}^{d-1}$ are encoded as 
    \begin{eqnarray}
        \{k,m\} \to  (U_{km} \otimes  \mathbb I )\ket{\psi_{SR}^{E'}} = \sum_{n=0}^{d-1} \omega^m \sqrt{p_n^{E'}} \ket{n+k ~\text{mod} ~d, ~n}.
    \end{eqnarray}
    The average state of the encoded ensemble is given by
      \begin{eqnarray}
        \bar \rho_{SR} = \frac1d \sum_{k,m = 0}^{d-1} q_k ~ (U_{km} \otimes  \mathbb I )\ket{\psi_{SR}^{E'}}\bra{\psi_{SR}^{E'}}(U_{km}^\dagger \otimes  \mathbb I )  = \bigoplus_{k=0}^{d-1} q_k 
\Big( \sum_{n=0}^{d-1} p_n^{E'} \ketbra{n+k ~\text{mod} ~d,n}{n+k ~\text{mod} ~d,n} \Big).
    \end{eqnarray}
The block-diagonal structure of $\bar \rho$ is obtained by summing over the index $m$. The average energy transmitted by the sender is computed as
\begin{eqnarray}
    \text{Tr}(\bar \rho_S H) = \sum_{k = 0}^{d-1}q_k E_k', ~~\text{where} ~E_k' = \sum_{n=0}^{d-1} (n+k ~\text{mod}  ~d)p_n^{E'}.
\end{eqnarray}
Here $\bar \rho_{S} = \text{Tr}_R \bar \rho_{SR}$, and we have $E_0' = E'$. The average energy constraint can then be succinctly expressed as
\begin{eqnarray}
     \text{Tr}(\bar \rho_S H) \leq E \implies \sum_{k = 0}^{d-1}q_k E_k' \leq E.
     \label{eq:energybounddcgen}
\end{eqnarray}
On the other hand, the eigenvalues of $\bar \rho_{SR}$ are simply $\{q_k p_n^{E'}\}_{k,n = 0}^{d-1}$. Hence the Holevo capacity is given by
\begin{eqnarray}
    S(\bar \rho_{SR}) = -\sum_{k=0}^{d-1}\sum_{n=0}^{d-1} q_k p_n^{E'} \log_2 \big(q_k p_n^{E'}\big) = \sum_{n=0}^{d-1} p_n^{E'} \times \Big( -\sum_{k=0}^{d-1} q_k \log_2 q_k  \Big) + \sum_{k=0}^{d-1} q_k \times \Big(-\sum_{n=0}^{d-1} p_n^{E'} \log_2 p_n^{E'}\Big).
\end{eqnarray}
This allows us to write 
\begin{eqnarray}
    S(\bar \rho_{SR}) =  \Big( -\sum_{k=0}^{d-1} q_k \log_2 q_k  \Big) +  \Big(-\sum_{n=0}^{d-1} p_n^{E'} \log_2 p_n^{E'}\Big) = H(\{q_k\}) + {\tt Ent} \Big(\ket{\psi_{SR}^{E'}}\Big),
    \label{eq:DCcapgend}
\end{eqnarray}
where ${\tt Ent}(\ket \phi)$ of any pure state $\ket \phi$ denotes the entanglement entropy of $\ket \phi$.
Finally, 
combining Eqs. \eqref{eq:DCcapgend} and \eqref{eq:energybounddcgen} completes the proof of Theorem. \ref{th:ecdcc}.

\section{Proof of Theorem. \ref{theorem:passive}}
\label{app:passive}
The proof begins by investigating the form of the pre-shared entangled state that may provide any assistance to the classical capacity achievable using the qubit $(d= 2)$ channel shared between them. Suppose the communicating parties, $S$ and $R$, share the state $\rho_{SR} \in \mathbb{C}^2\otimes \mathbb{C}^d$.
%where we only need to assume dim$($Tr$_B \rho_{AB}) = 2$, since the communication line between the communicating parties is a qubit $(d = 2)$ channel. 
Now we assume that 
the receiver, $R$, has access to the purification of the state, 
\begin{eqnarray}
\rho_{SR} \to \ket{\psi_{SRR^\prime}} \in \mathbb{C}^2\otimes \mathbb{C}^{dd^\prime}.
\end{eqnarray}
Now, $\ket{\psi_{SRR^\prime}}$ can be written in the Schmidt form
\begin{eqnarray}
    \ket{\psi_{SRR^\prime}} = \sqrt{\alpha} \ket{\eta ~\mu} + \sqrt{1-\alpha} \ket{\eta^\perp ~\mu^\perp} = U_\eta\otimes \mathbb{I} (\sqrt{\alpha} \ket{0 \mu}+ \sqrt{1-\alpha}\ket{1 \mu^\perp}),
\end{eqnarray}
where $\{\ket 0, \ket 1 \}$ are the eigenstates of the system Hamiltonian, with  $U_\eta \ket{0(1)} = \ket{\eta(\eta^\perp)}$.

From Lemma.~\ref{lemma:epc}, the most general form of passive (energy-preserving) operations in $d = 2$ is given by 
\begin{eqnarray}
    \mathcal{E}(\rho)  = \sum_j q_j \mathcal{U}_j(\rho),
\end{eqnarray}
where $\mathcal{U}_k (\sigma) = U_k \sigma U_k^\dagger$ with $U_k =\cos \theta_k ~\mathbb I +i \sin \theta_k \sigma_z$, and $\sum_k q_k = 1$. The encoding of the classical message $x$ is achieved by the following encoding $x \to \rho_x =\mathcal{E}_x \otimes \mathbb I (\ketbra{\psi_{SRR^\prime}}{\psi_{SRR^\prime}})$. Now, the form of the energy preserving channels (Lemma. \ref{lemma:epc}) constrains the set of encoded states $\{\rho_x\}$ lie in a two-dimensional subspace spanned by the vectors $\{\ket{\psi_{SRR^\prime}}, \sigma_z \otimes \mathbb I \ket{\psi_{SRR^\prime}} \}$ with \(\sigma_{i}(i=x,y,z)\) being Pauli matrices. The two-dimensionality of the encoded ensemble implies Theorem.~\ref{theorem:eccp1} for $d=2$ is applicable here. Thereby limiting the capacity to the one expressed in Eq.~\eqref{eq:capacityd=2},  wiping out the possibility of any quantum advantage. This completes the proof.

\section{Energy-constrained dense coding protocol in $d=2$}
\label{app:proofofoptimaldcd=2}
In this section, we assess the dense coding protocol for both noiseless and noisy scenario when sender's side is two-dimensional.
\subsection{Noiseless DC capacity under average energy constraint}
The elements of Protocol.~\ref{protocol:ecdc} corresponding to average energy constraint in $d=2$ for $E<1/2$ can be explicitly written as 
\begin{eqnarray}
    \nonumber\ket{\psi_{SR}^{E'}}&=&\sqrt{1-E'}\ket{00}+\sqrt{E'}\ket{11},\\
    V_0&=&\{\mathbb I,\sigma_z\};~~V_1=\{\sigma_x,\sigma_y\},
    \label{eq:elements}
\end{eqnarray}
where $E'<E$. Using Eq.~\eqref{eq:elements} in Eq.~\eqref{eq:complementary} of Theorem.~\ref{th:ecdcc}, the average energy-constrained DC capacity is given by
\begin{equation}
\begin{aligned}
    C^{\tt DC}_{\{E\}_A}&=\mathcal C(E)=\max_{q_0,E'}H(q_0)+H(E')\\
    \textrm{subject to}\quad & q_0E'+(1-q_0)(1-E')\leq E.
\end{aligned}
\label{eq:qubit_dcc}
\end{equation}
\begin{figure*}
\includegraphics[width=\textwidth]{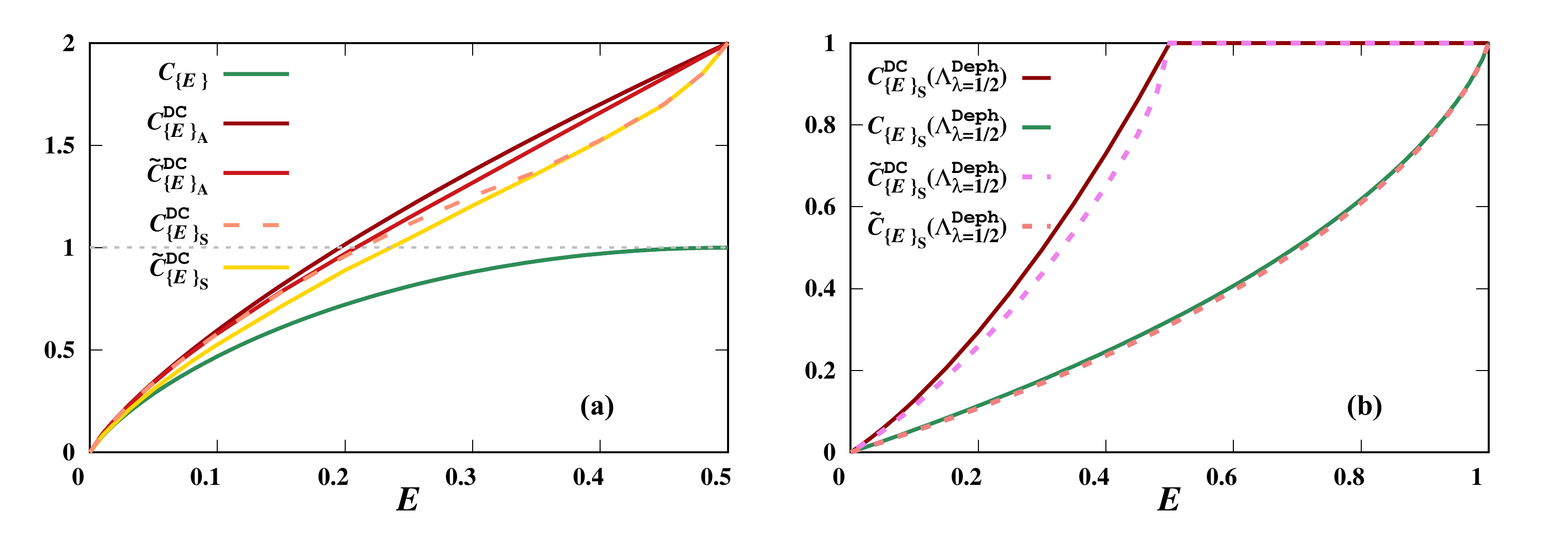}
\caption{(a) comparison of noiseless DC capacities, $C^{\tt DC}_{\{E\}_A}, \tilde C^{\tt DC}_{\{E\}_A}, C^{\tt DC}_{\{E\}_S},\tilde C^{\tt DC}_{\{E\}_S}$ and the unassisted classical capacity $C_{\{E\}}$ (ordinate) in the region of energy bound $E\in[0,1/2]$ (abscissa). (b) DC capacities $C^{\tt DC}_{\{E\}_S}(\Lambda_{\lambda=1/2}^{\tt deph}), \tilde C^{\tt DC}_{\{E\}_S}(\Lambda_{\lambda=1/2}^{\tt deph})$ and entanglement-unassisted classical capacities, $C_{\{E\}_S}(\Lambda_{\lambda=1/2}^{\tt deph}),$ and $\tilde C_{\{E\}_S}(\Lambda_{\lambda=1/2}^{\tt deph})$ (ordinate) of complete dephasing channel against strict energy bound $E$ (abscissa). In the nontrivial energy bound region, i.e., in $E\in(0,1)$, similar to the case of $C^{\tt DC}_{\{E\}_S}(\Lambda_{\lambda=1/2}^{\tt deph})>C_{\{E\}_S}(\Lambda_{\lambda=1/2}^{\tt deph})$ as mentioned in the main text, in the case of equiprobable signal ensemble also, entanglement provides advantage in classical capacity, i.e., $\widetilde C^{\tt DC}_{\{E\}_S}(\Lambda_{\lambda=1/2}^{\tt deph})>\widetilde C_{\{E\}_S}(\Lambda_{\lambda=1/2}^{\tt deph})$. Both axes are dimensionless.} 
\label{fig:compare_dcc_cc}
\end{figure*}
Choosing $q_0=1-E'$ (which is optimal as shown below) leads to the optimal value of the energy of the shared state to be $E'=(1-\sqrt{1-2E})/2$, which yields
\begin{eqnarray}
    \mathcal{C}(E)=2H\left(\frac{1}{2}(1-\sqrt{1-2E})\right).
    \label{eq:qubit_dcc_exp}
\end{eqnarray}
Therefore, for $E>0$, we have $C_{\{E\}_A}^{\tt DC}>C_{\{E\}}$.

\textbf{Proof of optimality of Protocol.~\ref{protocol:ecdc} in $d=2$.} To show that our protocol is optimal in $d=2$ along with the choice of $q_0=1-E'$, which leads to Eq.~\eqref{eq:qubit_dcc_exp}, we resort to numerical optimization technique. Specifically, we consider a set of encoding unitaries \(\{U_i\}_{i=0}^3\), chosen with corresponding probabilities \(\{p_i\}_{i=0}^3\), where \(U_0 = \mathbb{I}\) and, \(U_j = \exp\big(-i \mu_j (\hat{n}_j \cdot \bm{\sigma})\big)\), for \(j \geq 1\). Here, \(\hat{n}_j = (\sin \theta_j \cos \phi_j, \sin \theta_j \sin \phi_j, \cos \theta_j)\) denotes a unit vector on the Bloch sphere, and \(\bm{\sigma} = \{\sigma_x, \sigma_y, \sigma_z\}\) represents the vector of Pauli matrices. Using an algorithm based on improved stochastic ranking strategy (ISRES) \cite{Runarsson2005}, we evaluate
\begin{equation}
\begin{aligned}
    \mathcal C(E) &= \max_{p_0,U_1,U_2,U_3,E'}\chi(\{p_i,\mathcal U_i({\tt Proj}\ket{\psi_{SR}^{E'}})\}_{i=0}^3)\\
    \textrm{subject to} &\sum_{i=0}^3p_i\text{Tr}(\mathcal U_i({\tt Proj}\ket{\psi_{SR}^{E'}})H)\leq E.
    \label{eq:qubit_aec_dcc_numerics}
\end{aligned}
\end{equation}
Here, the maximization is performed over the parameter space $\{0\leq\mu_j,\phi_j\leq2\pi\}_{j=1}^3$, $\{0\leq\theta_j\leq\pi\}_{j=1}^3$, $0\leq p_0\leq 1$ and $0< E'< E$. The  numerical analysis through Eq.~\eqref{eq:qubit_aec_dcc_numerics} matches with the capacity derived in Eq.~\eqref{eq:qubit_dcc_exp} up to $\mathcal O(10^{-5})$. Hence, Protocol.~\ref{protocol:ecdc} is optimal for average energy-constrained DC in $d=2$.

\subsection{Comparing other noiseless constrained DC capacities}
\label{app:onc}
We perform a numerical investigation of DC capacity with strict energy constraint, along with equiprobable messages in both strict and average energy constraints. In the noiseless setting, within the region \( E \in (0, \tfrac{1}{2}) \), we observe (see Fig.~\ref{fig:compare_dcc_cc}(a)) the hierarchy \( C_{\{E\}_A}^{\tt DC} \gtrsim \widetilde{C}_{\{E\}_A}^{\tt DC} \gtrsim C_{\{E\}_S}^{\tt DC} \gtrsim \widetilde{C}_{\{E\}_S}^{\tt DC}>C_{\{E\}} \) where the notational description of DC capacity follows from Eq.~\eqref{eq:dccc} in the main text. Therefore, in any energy constrained or restricted equiprobable signal noiseless scenario, assistance of entanglement provides advantage over classical capacity.

\subsection{DC capacity of complete dephasing channel}
\label{app:cdc}
In the main text, we have shown that $C^{\tt DC}_{\{E\}_s}(\Lambda^{\tt Deph}_{\lambda=\frac12})>C_{\{E\}_s}(\Lambda^{\tt Deph}_{\lambda=\frac12})$ for all values of $E\in(0,1)$. This means the equivalency between entanglement-assisted capacity and unassisted capacity of CQ channel in the energy unconstrained scenario \cite{Shirokov2012} is broken in the strict energy constrained case. In fact, when we restrict our case to the equiprobable signal ensemble, it again follows $\tilde C^{\tt DC}_{\{E\}_s}(\Lambda^{\tt Deph}_{\lambda=\frac12})>\tilde C_{\{E\}_s}(\Lambda^{\tt Deph}_{\lambda=\frac12})=H(E/2)-H(E)/2$ in $E\in(0,1)$ (see Fig.~\ref{fig:compare_dcc_cc}(b)). However, the equivalency is recovered in the case of average energy constrained case, i.e., $C^{\tt DC}_{\{E\}_A}(\Lambda^{\tt Deph}_{\lambda=\frac12})=C_{\{E\}_s}(\Lambda^{\tt Deph}_{\lambda=\frac12})$  (up to numerical accuracy of $\mathcal O(10^{-5})$).

\end{document}